  \providecommand\BibTeX{{%
    \normalfont B\kern-0.5em{\scshape i\kern-0.25em b}\kern-0.8em\TeX}}}
\begin{document}

\title{FPSelect: Low-Cost Browser Fingerprints for Mitigating Dictionary Attacks against Web Authentication Mechanisms}

\author{Nampoina Andriamilanto}
\orcid{0000-0002-0224-5664}
\email{tompoariniaina.andriamilanto@irisa.fr}
\affiliation{%
  \institution{Univ Rennes, CNRS, IRISA}
  \streetaddress{263 avenue du général Leclerc}
  \city{Rennes}
  \country{France}
  \postcode{35000}
}
\additionalaffiliation{%
  \institution{IRT b$<>$com}
  \streetaddress{1219 avenue Champs Blancs}
  \city{Cesson-Sévigné}
  \country{France}
  \postcode{35510}
}

\author{Tristan Allard}
\orcid{0000-0002-2777-0027}
\email{tristan.allard@irisa.fr}
\affiliation{%
  \institution{Univ Rennes, CNRS, IRISA}
  \streetaddress{263 avenue du général Leclerc}
  \city{Rennes}
  \country{France}
  \postcode{35000}
}

\author{Ga\"etan Le Guelvouit}
\email{gaetan.leguelvouit@b-com.com}
\affiliation{%
  \institution{IRT b$<>$com}
  \streetaddress{1219 avenue Champs Blancs}
  \city{Cesson-Sévigné}
  \country{France}
  \postcode{35510}
}

\renewcommand{\shortauthors}{Andriamilanto, et al.}

\begin{abstract}
  Browser fingerprinting consists into collecting attributes from a web browser.
  Hundreds of attributes have been discovered through the years.
  Each one of them provides a way to distinguish browsers, but also comes with a usability cost (e.g., additional collection time).
  In this work, we propose FPSelect, an attribute selection framework allowing verifiers to tune their browser fingerprinting probes for web authentication.
  We formalize the problem as searching for the attribute set that satisfies a security requirement and minimizes the usability cost.
  The security is measured as the proportion of impersonated users given a fingerprinting probe, a user population, and an attacker that knows the exact fingerprint distribution among the user population.
  The usability is quantified by the collection time of browser fingerprints, their size, and their instability.
  We compare our framework with common baselines, based on a real-life fingerprint dataset, and find out that in our experimental settings, our framework selects attribute sets of lower usability cost.
  Compared to the baselines, the attribute sets found by FPSelect generate fingerprints that are up to $97$ times smaller, are collected up to $3,361$ times faster, and with up to $7.2$ times less changing attributes between two observations, on average.
\end{abstract}

\begin{CCSXML}
<ccs2012>
<concept>
<concept_id>10002978.10002991.10002992.10011619</concept_id>
<concept_desc>Security and privacy~Multi-factor authentication</concept_desc>
<concept_significance>500</concept_significance>
</concept>
<concept>
<concept_id>10002978.10003006.10003011</concept_id>
<concept_desc>Security and privacy~Browser security</concept_desc>
<concept_significance>500</concept_significance>
</concept>
<concept>
<concept_id>10002978.10003022.10003026</concept_id>
<concept_desc>Security and privacy~Web application security</concept_desc>
<concept_significance>500</concept_significance>
</concept>
</ccs2012>
\end{CCSXML}

\ccsdesc[500]{Security and privacy~Multi-factor authentication}
\ccsdesc[500]{Security and privacy~Browser security}
\ccsdesc[500]{Security and privacy~Web application security}

\keywords{browser fingerprinting, web authentication, multi-factor authentication}

\maketitle

\newenvironment{proofsketch}{  %
  \renewcommand{\proofname}{Proof Sketch}\proof}{\endproof}

\section{Introduction}
  Nowadays, the managers of web platforms face a crucial choice about which authentication mechanism to use.
  On the one hand, using solely passwords is common and easy, but fallible due to the many attacks that exist: brute force, dictionary~\cite{WAMG09}, credential stuffing~\cite{TLZBRIMCEMMPB17}, or targeted knowledge attacks~\cite{WZWYH16}.
  Previous studies report over $3.3$~billion credentials leaked~\cite{MIL18}, a password reuse rate above $30$\%~\cite{HLNGX18}, and the risk of account hijacking increased by $400$~times if the credentials of an account are stolen through a phishing attack~\cite{TLZBRIMCEMMPB17}.
  On the other hand, using supplementary authentication factors improves security, but at the cost of usability~\cite{BHOS12}.
  Indeed, users are required to remember, possess, or undergo additional actions, which is impractical in real-life (e.g, using a security token requires users to constantly carry it).
  As a result, few users authenticate using multiple factors (e.g., it is estimated that less than $10$\% of the active Google accounts use two factors~\cite{PTAI15, MIL18}).

  Initially used to track users on the web, \emph{browser fingerprinting} has recently been identified as a promising authentication factor~\cite{UMFHSW13, PJ15, SPJ15, AV16, SPJ17, AAL21}.
  It consists into collecting the values of attributes from a web browser (e.g., the \texttt{UserAgent} HTTP header~\cite{RFC7231UserAgent}, the screen resolution, the way it draws a picture~\cite{MS12}) to build a fingerprint.
  This technique is already endorsed by open source access management solutions (e.g., OpenAM\footnote{
    \url{https://backstage.forgerock.com/docs/am/6.5/authentication-guide/\#device-id-match-hints}
  }) and by software products (e.g., SecureAuth\footnote{
    \url{https://docs.secureauth.com/x/agpjAg}
  }).

  \begin{figure}
    \centering
    \includegraphics[width=\columnwidth]{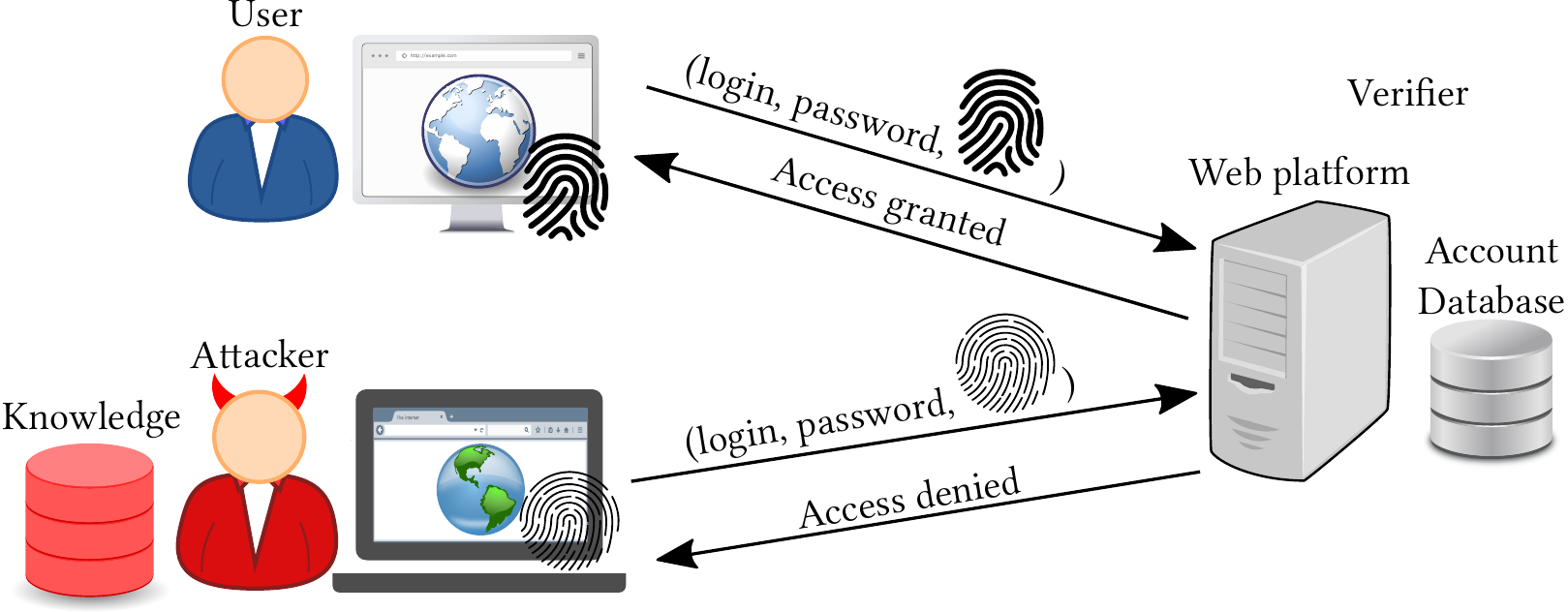}
    \caption{Example of a browser fingerprinting web authentication
      mechanism and a failed attack.}
    \label{fig:architecture-and-example}
    \Description[
      Example of a browser fingerprinting web authentication mechanism and a failed attack.
    ]{
      Example of a browser fingerprinting web authentication mechanism and a failed attack.
    }
  \end{figure}

  The two adversarial participants are the verifier and the attacker, as depicted in Figure~\ref{fig:architecture-and-example}.
  The verifier aims to protect the users of her web platform, using an authentication mechanism based on browser fingerprinting.
  The verifier stores the fingerprint of the usual browser of each user.
  On each login, the fingerprint of the browser in use is matched against the fingerprint that is stored for the claimed account.
  The attacker tries to impersonate the users by submitting specially crafted fingerprints.
  The aim of the verifier is to limit the reach of the attacker, also called \emph{sensitivity} below, which is measured as the proportion of impersonated users.
  To do so, she builds a fingerprinting probe that integrates one or more attributes, that are selected among the hundreds\footnote{
    Most attributes are properties accessed through the browser that are limited by its functionalities.
    Other attributes are items which presence are checked (e.g., the fonts~\cite{GAC16}, the extensions~\cite{SVS17}), or the computation of specific instructions (e.g., the HTML5 canvas~\cite{BMPT16}).
    These are limited by the available items or instructions, which can be large (e.g., more than $2^{154}$ for the canvas~\cite{LABN19}, nearly $30$ thousand detectable extensions~\cite{KISP20}).
  } that have been discovered over the years~\cite{ECK10, FE15, LRB16, AV16, SN17}.
  On the one hand, the addition of an attribute to the probe can strengthen the distinctiveness of browsers, hence reducing the sensitivity.
  On the other hand, each addition comes with a \emph{usability cost} that may render the probe impractical in an online authentication context.
  Indeed, each attribute consumes storage space (up to hundreds of kilobytes~\cite{BMPT16}), collection time (up to several minutes~\cite{MBYS11, MRHLSW13, NSY15, SYIHTCZ16, SYTT17, QF19}), and can increase the instability of the generated fingerprints~\cite{VLRR18}.
  For example, considering all of our attributes leads to a fingerprint taking $9.98$~seconds on average to collect, which is impractical for the user.
  Moreover, some attributes are strongly correlated together~\cite{AALG21}, and including them only increases the usability cost without reducing the sensitivity.
  Due to these correlations, picking attributes one by one independently may lead to poor sensitivity and usability scores.

  Previous works only consider the well-known attributes~\cite{ECK10, LRB16, GLB18}, remove the attributes of the lowest entropy~\cite{VLRR18}, iteratively pick the attribute of the highest weight (typically the entropy) until a threshold is reached~\cite{MEN11, KZW15, FE15, BRC16, HRA18, THS18}, or evaluate every possible set~\cite{FK12}.
  The entropy measures the skewness of the distribution of fingerprints or attribute values.
  As pointed out by Acar~\cite{ACA17}, it does not take the worst cases into account (i.e., the most common values that attackers can submit similarly to dictionary attacks on passwords~\cite{BON12}).
  Moreover, fingerprints cannot be compared identically like passwords due to their evolution through time.
  The attackers do not need to find the exact fingerprint of a victim, but one that is similar enough to deceive the verification mechanism.

  In this paper, we propose FPSelect, a framework that allows a verifier to select the attributes\footnote{
    We emphasize that the candidate attributes can contain dynamic attributes, which can be used to implement challenge-response mechanisms that resist fingerprint replay attacks~\cite{LABN19, REKP19}.
    We study nine instances of three dynamic attributes, which are the HTML5 canvas~\cite{BMPT16}, the WebGL canvas~\cite{MS12}, and audio fingerprinting methods~\cite{QF19}.
  } to include into her fingerprinting probe such that
  (1) the sensitivity against powerful attackers knowing the fingerprint distribution of the protected users (i.e., the worst-case distribution for the verifier) is bounded and the bound is set by the verifier, and
  (2) the usability cost\footnote{
    Any usability cost can be plugged (e.g, the privacy cost of including an attribute) provided that it is monotonic.
  } of collecting, storing, and using these attributes is close to being minimal.
  FPSelect is parameterized with the sensitivity requirement, the number of submissions that the attacker is deemed able to execute, and a representative sample of the fingerprints of the users.

  The problem could be solved by exploring exhaustively the space of the possible attribute sets, evaluating the sensitivity and the usability cost of each set.
  This is, however, infeasible as the number of attribute sets grows exponentially with the number of attributes\footnote{
    Obviously, this discards as well the manual selection of attributes.
  }.
  Moreover, we show below that the problem of finding the optimal attribute set is NP-hard.
  To the best of our knowledge, this is the first work that allows verifiers to dimension their fingerprinting probe in a sound manner, by quantifying the security level to reach, and selecting an attribute set that satisfies this level at a low usability cost.

  Our key contributions are the following:
  \begin{itemize}
    \item We formalize the \emph{attribute selection problem} that a verifier has to solve to dimension her probe.
      We show that this problem is NP-hard because it is a generalization of the Knapsack Problem.
      We define the model of the dictionary attacker, whose adversarial power depends on the knowledge of a fingerprint distribution.
      We propose a measure to quantify the sensitivity of a probe given a browser population and the number of fingerprints that the attacker is able to submit.
      We propose a measure of the usability cost that combines the size of the generated fingerprints, their collection time, and their instability.
    \item We propose a heuristic algorithm for selecting an attribute set that satisfies a higher bound on the sensitivity and reduces the usability cost.
      We express this as a search problem in the lattice of the power set of the candidate attributes.
      This algorithm is inspired by the Beam Search algorithm~\cite{JM09BeamSearch} and is part of the Forward Selection algorithms~\cite{SO14StepwiseRegression}.
    \item We evaluate the FPSelect framework on a real-life fingerprint dataset, and compare it with common attribute selection methods based on the entropy and the conditional entropy.
      We show experimentally that FPSelect finds attribute sets that have a lower usability cost.
      The attribute sets found by FPSelect generate fingerprints that are $12$ to $1,663$ times smaller, $9$ to $32,330$ times faster to collect, and with $4$ to $30$ times less changing attributes between two observations, compared to the candidate attributes and on average.
      Compared to the baselines, the attribute sets found by FPSelect generate fingerprints that are up to $97$ times smaller, are collected up to $3,361$ times faster, and with up to $7.2$ times less changing attributes between two observations, on average.
  \end{itemize}

  The rest of the paper is organized as follows.
  Section~\ref{sec:problem-statement} defines the attack model and the attribute selection problem.
  Section~\ref{sec:attribute-selection} describes the resolution algorithm and the proposed illustrative measures of sensitivity and usability cost.
  Section~\ref{sec:experimental-validation} provides the results obtained by processing our framework and the baselines on a real-life fingerprint dataset.
  Section~\ref{sec:discussion} discusses concrete usage of the framework.
  Section~\ref{sec:related-works} describes the works related to the attribute selection problem.
  Finally, Section~\ref{sec:conclusion} concludes.

\section{Problem Statement}
\label{sec:problem-statement}
  In this section, we first present the considered authentication mechanism that relies on browser fingerprinting.
  Then, we describe how we model the attacker given his knowledge and possible actions.
  Finally, we pose the attribute selection problem that we seek to solve, and provide an example to illustrate the problem.

  \subsection{Authentication Mechanism}
    We consider the architecture and the three participants depicted in Figure~\ref{fig:architecture-and-example}.
    The authentication mechanism is executed on a \emph{trusted web platform} and aims at authenticating \emph{legitimate users} based on various authentication factors, including their browser fingerprint (in addition to, e.g., a password).
    For the sake of precision, we focus on the browser fingerprint and ignore the other factors.

    A user is enrolled by providing his browser fingerprint to the verifier who stores it.
    During the authentication of a user, the fingerprint of the browser in use is collected by the fingerprinting probe of the verifier, and is compared with the fingerprint stored for the claimed account.
    If the collected fingerprint matches with the one stored, the user is given access to the account, and the stored fingerprint is updated to the newly collected one.
    The comparison is done using a \emph{matching function} (i.e., a similarity function between two fingerprints that authorizes differences), as fingerprints are known to evolve~\cite{ECK10, VLRR18, AAL21}.
    Any matching function can be used provided that it is monotonic (i.e., if two fingerprints match\footnote{
      We stress that the monotonic property does not depend on the attributes.
    } for an attribute set~$C$, they also match for any subset of~$C$).
    We explain in Section~\ref{sec:attribute-selection-problem} the need for the monotonicity requirement, and refer to Section~\ref{sec:implementation-of-the-matching-function} for an example of a matching function.
    We consider one browser per user and discuss the extension to multiple browsers in Section~\ref{sec:usage-of-multiple-browsers}.

  \begin{figure}
    \centering
    \includegraphics[width=\columnwidth]{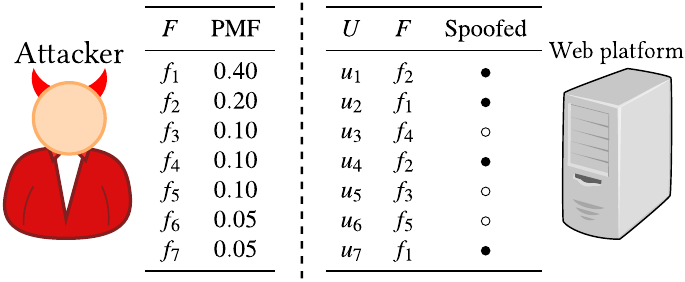}
    \caption{
      Example of an attacker instantiated with his knowledge of a probability mass function (PMF) over the fingerprints~$F$, and a web platform protecting a user population~$U$ with their fingerprint.
      We consider a limit of two submissions and a strict comparison between the fingerprints.
      The attack dictionary is composed of $f_1$ and $f_2$, resulting in the shown impersonated users.
    }
    \label{fig:attacker-instance}
    \Description[
      Example of an attacker instantiated with his knowledge of a probability mass function over the fingerprints, and a web platform protecting a user population using their fingerprint.
    ]{
      Example of an attacker instantiated with his knowledge of a probability mass function over the fingerprints, and a web platform protecting a user population using their fingerprint.
    }
  \end{figure}

  \subsection{Attack Model}
  \label{sec:attack-model}
    The high-level goal of the attacker is to impersonate legitimate users in a \emph{limited number of submissions} with the help of his knowledge, by forging a \emph{fingerprint attack dictionary} similarly to dictionary attacks on passwords~\cite{BON12}.
    Figure~\ref{fig:attacker-instance} illustrates the attack that we consider.
    It shows an attacker with his knowledge of a fingerprint distribution, a population of protected users with their fingerprint, and the impersonated users.
    We define the attacker model in terms of background knowledge and possible actions, which are provided below and described further in the following subsections.
    \begin{enumerate}
      \item The attacker cannot tamper with the web platform.
        \label{enum:cannot-tamper-web-platform}
      \item The attacker cannot tamper with, nor eavesdrop, the communication between the users and the web platform.
        \label{enum:cannot-eavesdrop-tamper-communications}
      \item The attacker knows the attributes of the probe.
        \label{enum:knows-implemented-attributes}
      \item The attacker knows a fingerprint distribution.
        \label{enum:know-distribution-fingerprints}
      \item The attacker can submit a limited number of arbitrary fingerprints.
        \label{enum:can-submit-arbitrary-fingerprints}
    \end{enumerate}

    \subsubsection{Background Knowledge}
      The attacker can retrieve the att\-ri\-bu\-tes of the fingerprinting probe (assumption~\ref{enum:knows-implemented-attributes}) by reverse-engi\-neer\-ing the probe (e.g., static or dynamic analysis of the probe~\cite{AAH18}, analysis of the network packets).

      The attacker knows the \emph{domain of the fingerprints}, and can infer a fingerprint distribution (assumption~\ref{enum:know-distribution-fingerprints}) from documentation~\cite{RFC7231UserAgent}, datasets\footnote{
        \url{https://www.henning-tillmann.de/en/2014/05/browser-fingerprinting-93-of-all-user-configurations-are-unique}
      }, or statistics\footnote{
        \url{http://carat.cs.helsinki.fi/statistics}
      } available online.
      He can also leverage phishing attacks~\cite{TLZBRIMCEMMPB17}, a pool of controlled browsers~\cite{PIPMIV19}, or stolen fingerprints~\cite{MAR20}.
      The weakest attacker is the one that lacks knowledge, and considers that the values of the attributes and fingerprints are uniformly distributed.
      His strategy is then to cover a space as large as possible of the fingerprint possibilities in the number of submissions authorized by the verifier.
      The strongest attacker is the one that manages to infer the exact fingerprint distribution among the users protected by the verifier.
      Additionally, our work can be easily extended to the attackers that partially know the fingerprints of targeted users\footnote{
        We do not consider the attackers that exactly know the fingerprint of the users they target (or their local configuration) because they are able to bypass trivially any fingerprinting authentication mechanism.
      } (see Section~\ref{sec:per-browser-attribute-sets}).

    \subsubsection{Actions}
      Tools exist for controlling the attributes\footnote{
        These tools are able to control both the fixed and the dynamic attributes.
      } that compose the fingerprint (assumption~\ref{enum:can-submit-arbitrary-fingerprints}), like Blink~\cite{LRB15} or Disguised Chromium Browser~\cite{BKST16}.
      Commercial solutions also exist, like AntiDetect\footnote{
        \url{https://antidetect.org}
      } or Multilogin\footnote{
        \url{https://multilogin.com}
      }.
      An attacker can also automatically alter the network packet that contains the fingerprint using tools like BurpSuite\footnote{
        \url{https://portswigger.net/burp}
      }.
      As these attacks are online guessing attacks~\cite{BON12, WZWYH16}, we assume that the attacker is limited to a number of submissions per user.

    \subsubsection{Attacker Instance}
      The verifier instantiates an attacker by his knowledge of a fingerprint distribution, and by the number of submissions to which he is limited, to measure his reach.

  \subsection{Attribute Selection Problem}
  \label{sec:attribute-selection-problem}
    The defense problem consists into selecting the attribute set that composes the fingerprinting probe, to resist against an instantiated attacker and minimize the usability cost.
    On the one hand, including an attribute can reduce the reach of an attacker -- called the \emph{sensitivity} and measured as the \emph{proportion of impersonated users} -- because it adds one more information to distinguish different browsers.
    On the other hand, it increases the \emph{usability cost} of the mechanism.
    For example, the fingerprints take more space to store, can take more time to collect, and can be more difficult to recognize due to the potentially induced instability.

    The \emph{Attribute Selection Problem} consists in finding the attribute set that provides the lowest usability cost and keeps the sensitivity below a threshold~$\alpha$ set by the verifier\footnote{
      The \emph{sensitivity threshold}~$\alpha$ is defined by the verifier according to her security requirements.
      These requirements depend on the type of website that is to protect (e.g., a bank, a forum) and the contribution of browser fingerprints (e.g., the only secondary authentication factor, an additional verification among others~\cite{SCCDVO19}).
    }.
    Let~$A$ denote the set of the candidate attributes.
    We consider an attribute set~${C \subseteq A}$, its usability cost~$\mathrm{c}(C)$, and its sensitivity~$\mathrm{s}(C)$.
    Any measure of usability cost and sensitivity can be plugged in FPSelect provided that it is \emph{monotonic}.
    Indeed, the usability cost is required to be \emph{strictly increasing} as we add attributes to an attribute set (e.g., the additional attributes are stored, which increases the storage cost).
    The sensitivity is required to be \emph{monotonically decreasing} as we add attributes to an attribute set\footnote{
      The monotonicity requirement of the matching function comes from the monotonicity requirement of the sensitivity.
      Indeed, if the matching function was not monotonic, adding an attribute could result in a loss of distinctiveness (i.e., it is harder for the matching function to distinguish two browsers) and consequently in an increase of the sensitivity.
    }.
    Indeed, adding an attribute to an attribute set should not higher the sensitivity because the added attribute either adds distinguishing information to the fingerprints or adds no information if it is strongly correlated with another attribute.
    For illustrative purposes, we propose measures of sensitivity and usability cost in Section~\ref{sec:attribute-selection}.
    The ASP is thus formalized as searching for ${\arg \min_{C \subseteq A} \{ \mathrm{c}(C) : \mathrm{s}(C) \leq \alpha \}}$.

  \subsection{Illustration of the Attribute Selection Problem}
    To illustrate the problem, we propose an example of a fingerprint distribution in Table~\ref{tab:fingerprint-dataset-example}.
    We consider an attacker who managed to infer the exact same distribution, and who is able to submit one fingerprint per user.
    If we solely include the \texttt{CookieEnabled} attribute which provides no distinctiveness, this attacker can impersonate every user by submitting the \texttt{True} value.
    Whereas including the \texttt{Language} and \texttt{Screen} attributes leads to unique fingerprints, which reduces the sensitivity to a sixth.
    Ignoring the \texttt{Cookie\-En\-abl\-ed} attribute reduces the usability cost without increasing the sensitivity.
    There is also an example of correlation.
    The \texttt{Timezone} and the \texttt{Language} attributes are the two most distinctive attributes, but including both does not improve the distinctiveness compared to considering \texttt{Language} alone.

    \begin{table}
      \centering
      \begin{tabular}{ccccc}
        \toprule
          User  & CookieEnabled & Language & Timezone & Screen \\
        \midrule
          $u_1$ & True          & fr       & -1       & 1080   \\
          $u_2$ & True          & en       & -1       & 1920   \\
          $u_3$ & True          & it       & 1        & 1080   \\
          $u_4$ & True          & sp       & 0        & 1920   \\
          $u_5$ & True          & en       & -1       & 1080   \\
          $u_6$ & True          & fr       & -1       & 1920   \\
        \bottomrule
      \end{tabular}

      \caption{
        Example of fingerprints shared by users.
      }
      \label{tab:fingerprint-dataset-example}
    \end{table}

\section{Attribute Selection Framework}
\label{sec:attribute-selection}
  This section is dedicated to the description of our attribute selection framework.
  First, we show that the Attribute Selection Problem (ASP) is NP-hard because it is a generalization of the Knapsack Problem (KP), and remark that the ASP can be seen as a lattice of partial KP.
  Second, and consequently, we propose a greedy heuristic algorithm for finding solutions to the problem.
  Finally, we propose illustrative measures for the sensitivity and the usability cost.

  \subsection{Similarity to the Knapsack Problem}
  \label{sec:asp-knapsack-problem}
    The \emph{Knapsack Problem} (KP)~\cite{KPP04} is a NP-hard problem that consists into fitting valued-items into a limited-size bag to maximize the total value.
    More precisely, given a bag of capacity~$W$ and $n$~items with their value~$v_i$ and their weight~$w_i$, we search for the item set that maximizes the total value and which total weight does not exceed~$W$.
    In this section, we show that the ASP is a generalization of the KP, therefore the ASP is NP-hard.
    We also provide a way to model the ASP as a lattice of partial KP.

    First, we remark that the ASP can be solved by picking attributes until we reach the sensitivity threshold, or by starting from the candidate attributes and removing attributes successively without exceeding the threshold.
    We consider the latter and start from the set~$A$ of the candidate attributes.
    The \emph{value} of an attribute set~$C$ is the cost reduction compared to the candidate attributes, formalized as ${v(C) = \mathrm{c}(A) - \mathrm{c}(C)}$.
    The \emph{value} of an attribute~$a$ is the cost reduction obtained when removing~$a$ from~$C$, which is formalized as ${v(a | C) = \mathrm{c}(C) - \mathrm{c}(C \setminus \{ a \})}$.
    The \emph{weight} of an attribute set~$C$ is its sensitivity with ${w(C) = \mathrm{s}(C)}$.
    The \emph{weight} of an attribute~$a$ is the additional sensitivity induced by the attribute removal, formalized as ${w(a | C) = \mathrm{s}(C \setminus \{ a \}) - \mathrm{s}(C)}$.
    The \emph{capacity}~$W$ is the maximum sensitivity allowed, hence ${W = \alpha}$.
    As we remove attributes, the value increases (i.e., the usability cost decreases), and the weight (i.e., the sensitivity) may increase.

    \begin{theorem}
    \label{thm:asp-knapsack-similarity}
      The Attribute Selection Problem is NP-hard.
    \end{theorem}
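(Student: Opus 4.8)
The plan is to prove NP-hardness by a polynomial-time reduction from the (decision version of the) Knapsack Problem, exhibiting KP as a special case of the ASP; since KP is NP-hard, so is the ASP. Following the removal-based reformulation already set up above, I would map each KP item to a candidate attribute and make \emph{selecting an item into the bag} correspond to \emph{removing an attribute from}~$A$. Concretely, given a KP instance with items $1,\dots,n$, values $v_i > 0$, weights $w_i \geq 0$, and capacity~$W$, I construct an ASP instance whose candidate set is $A = \{a_1,\dots,a_n\}$, equipped with an additive cost $\mathrm{c}(C) = \sum_{a_i \in C} v_i$ and an additive sensitivity $\mathrm{s}(C) = \sum_{a_i \notin C} w_i$, and I set the sensitivity threshold to $\alpha = W$. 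This construction is clearly computable in polynomial time.

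Next I would verify that this instance is a legitimate ASP instance by checking the monotonicity requirements. Since every $v_i > 0$, the cost $\mathrm{c}$ is strictly increasing as attributes are added, and since every $w_i \geq 0$, the sensitivity $\mathrm{s}$ is monotonically decreasing as attributes are added; both conditions match the hypotheses imposed on admissible cost and sensitivity measures. Under this encoding the per-attribute quantities defined earlier specialize exactly to the KP data: the value of removing $a_i$ is $v(a_i \mid C) = \mathrm{c}(C) - \mathrm{c}(C \setminus \{a_i\}) = v_i$, and its weight is $w(a_i \mid C) = \mathrm{s}(C \setminus \{a_i\}) - \mathrm{s}(C) = w_i$, independently of~$C$.

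The core of the argument is then to show that optimal solutions correspond. Writing $S = A \setminus C$ for the set of removed attributes (the ``chosen items''), the sensitivity constraint $\mathrm{s}(C) \leq \alpha$ becomes $\sum_{i \in S} w_i \leq W$, i.e.\ exactly the knapsack capacity constraint. Because $\mathrm{c}(A) = \sum_i v_i$ is a fixed constant, minimizing $\mathrm{c}(C)$ over feasible~$C$ is equivalent to maximizing $\mathrm{c}(A) - \mathrm{c}(C) = \sum_{i \in S} v_i$, which is exactly the knapsack objective. Hence a subset~$S$ is an optimal (resp.\ feasible) knapsack solution if and only if $C = A \setminus S$ is an optimal (resp.\ feasible) ASP solution, and the decision versions agree accordingly. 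Therefore any ASP solver yields a KP solver, proving the ASP NP-hard.

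I expect the main obstacle to be conceptual rather than computational: one must be careful that the admissible-function requirements (strict monotone increase of the cost, monotone decrease of the sensitivity) are genuinely satisfied by the simple additive functions used, since the reduction is only valid if the constructed instance actually lies within the class of ASP instances. The remaining bookkeeping --- matching the capacity to~$\alpha$ and translating ``maximize value'' into ``minimize cost'' via the constant offset $\mathrm{c}(A)$ --- is routine once the removal viewpoint is fixed.
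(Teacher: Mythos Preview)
Your proposal is correct and follows essentially the same approach as the paper: exhibit the Knapsack Problem as the special case of the ASP in which the per-attribute value and weight are independent of the current set (i.e., additive cost and sensitivity), using the removal viewpoint so that ``remove attribute'' corresponds to ``put item in the bag.'' Your write-up is in fact more careful than the paper's, since you explicitly construct $\mathrm{c}$ and $\mathrm{s}$ and verify the monotonicity hypotheses that make the constructed instance a bona fide ASP instance.
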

    \begin{proof}
      We consider a simple case where the attributes are not correlated.
      The weight and the value of the attribute~$a_i$ does not depend on the attributes already included in the probe, and is simply defined as~$w_i$ and~$v_i$.
      We obtain a Knapsack Problem consisting into picking the attributes to remove from~$A$, to maximize the total value and keep the weight under the threshold~$W$.
      The ASP is therefore a generalization of the KP with relative weights and costs, making it at least as hard as the KP which is NP-hard.
      The Attribute Selection Problem is therefore NP-hard.
    \end{proof}

    \subsubsection{The Attribute Selection Problem as a Lattice of Partial Knapsack Problems}
      The Attribute Selection Problem can be modeled as a lattice of partial Knapsack Problems (KP).
      We consider the deletive way that starts from the set~$A$ of the candidate attributes and removes attributes without exceeding the threshold.
      The initial partial KP consists into picking attributes from~$A$ to increase the value and keep the weight under~$W$.
      The value and weight of each attribute~${a \in A}$ is ${v(a | A)}$ and ${w(a | A)}$.
      Once we pick an attribute~$a_p$, a new partial KP arises: the item set is~${A \setminus \{ a_p \}}$, the capacity is~${W - w(a_p | A)}$, and the value and weight of each attribute~${a \in A \setminus \{ a_p \}}$ is now ${v(a | A \setminus \{ a_p \})}$ and ${w(a | A \setminus \{ a_p \})}$.
      Recursively, it holds for any set~$R$ of attributes to remove.
      The item set is then~${A \setminus R}$, the capacity is~${W - w(R)}$, and the value and weight of each attribute~${a \in A \setminus R}$ are ${v(a | A \setminus R)}$ and ${w(a | A \setminus R)}$.
      Following this, we are given a lattice\footnote{
        This can be seen as a tree, but some paths lead to the same node.
        Indeed, removing the attributes $a_1$ then $a_2$ from $A$ leads to the same partial problem as removing $a_2$ then $a_1$.
      } of partial KP to solve recursively, each node being a partial solution~$R$, until we reach unfeasible problems (i.e., empty set of items, no more item can fit) and find a final solution among the partial solutions that reach this limit.

  \subsection{Lattice Model and Resolution Algorithm}
    In this section, we present how we model the possibility space as a lattice of attribute sets, and describe the greedy heuristic algorithm to approximately solve the ASP.

    \begin{figure}
      \centering
      \includegraphics[width=0.8\columnwidth]{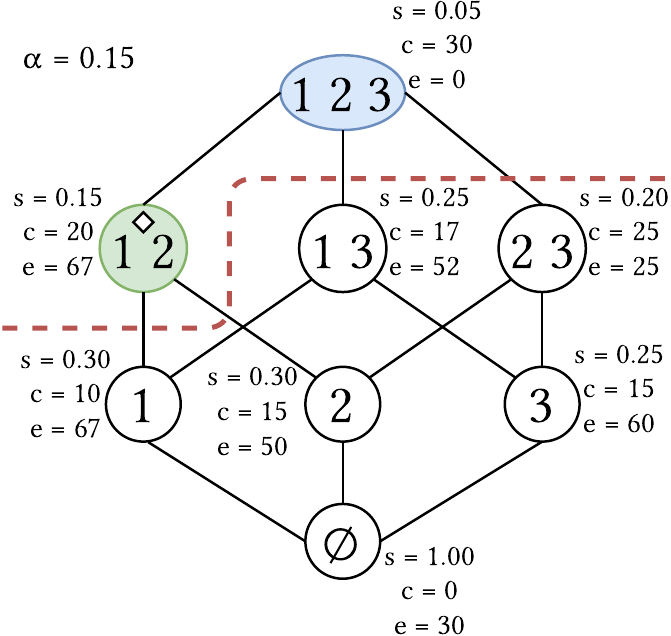}
      \caption{
        Example of a lattice of attribute sets, with their cost~$\mathrm{c}$, their sensitivity~$\mathrm{s}$, and their efficiency~$\mathrm{e}$.
        The blue node satisfies the sensitivity, the white nodes do not, and the green node with a diamond satisfies the sensitivity and minimizes the cost.
        The red line is the satisfiability frontier.
      }
      \label{fig:lattice-example}
      \Description[
        Example of a lattice of attribute sets, with their cost, their sensitivity, and their efficiency.
      ]{
        Example of a lattice of attribute sets, with their cost, their sensitivity, and their efficiency.
      }
    \end{figure}

    \subsubsection{Lattice Model}
    \label{sec:lattice-model}
      The elements of the lattice are the \emph{subsets} of~$A$ ($A$ included) and the order is the \emph{subset relationship} so that ${C_i \prec C_j}$ if, and only if, ${C_i \subset C_j}$.
      The efficiency of an attribute set~$C$ is the ratio between its cost reduction (i.e., ${c(A) - c(C)}$) and its sensitivity.
      Figure~\ref{fig:lattice-example} shows an example of such lattice.
      The \emph{satisfiability frontier} represents the transition between the attribute sets that satisfy the sensitivity threshold, and those that do not.
      The attribute sets just above this frontier satisfy the sensitivity threshold at a lower cost than any of their supersets.
      They comprise the solution found by our resolution algorithm and the optimal solution to the problem.

      The sensitivity and the cost are bounded.
      The lower bound is located at the empty set, which has a sensitivity of $1.0$ and a null usability cost.
      It is equivalent to not using browser fingerprinting at all.
      On the other end, the set composed of the candidate attributes~$A$ is a superset of every attribute set, and provides the lowest sensitivity and the highest usability cost.
      If $A$ does not satisfy the sensitivity threshold, there is no solution as any other subset has a higher or equal sensitivity.

    \subsubsection{Greedy Algorithm}
      We propose the greedy heuristic algorithm presented in Algorithm~\ref{alg:asp-greedy} to find good solutions to the Attribute Selection Problem.
      It consists into a bottom-up exploration of the lattice by following $k$~paths until reaching the satisfiability frontier.
      The higher $k$ is, the larger is the explored space, but the higher is the computing time.
      This algorithm is inspired by the model of the ASP as a lattice of partial Knapsack Problems, and by the Beam Search algorithm~\cite{JM09BeamSearch}.
      The similarity with the latter lies in the successive expansion of a limited number of nodes, that are chosen according to an order between partial solutions.
      The order is the efficiency in our case.
      Our proposed algorithm is part of the Forward Selection algorithms~\cite{SO14StepwiseRegression}, as it iteratively picks attributes according to a criterion, and takes into account those already chosen.
      However, the proposed algorithm provides the ability to explore several sub-solutions instead of a single one, includes pruning methods that help reduce the computing time, and stops when it reaches the satisfiability frontier instead of when the criterion is not statistically improved.

      \begin{algorithm}[t]
        \caption{Greedy algorithm to find good solutions to the Attribute Selection Problem.}
        \label{alg:asp-greedy}

        \KwData{The candidate attributes~$A$, the sensitivity threshold~$\alpha$, the number of explored paths~$k$.}
        \KwResult{The attribute set of the explored paths that satisfies the sensitivity threshold at the lowest cost.}

        $c_{min}, T, I \gets \inf, \varnothing, \varnothing$ \\
        $S \gets$ a collection of $k$ empty sets \\

        \If{$\mathrm{s}(A) > \alpha$}{
          Quit as no solution exists
        }

        \While{$S$ is not empty}{
          $E \gets \{
            {C = S_i \cup \{ a \}} :
              {\forall S_i \in S},
              {\forall a \in A \setminus S_i},
              {\nexists C' \in T \cup I},
              {C' \subset C}
          \}$ \\

          $S \gets \varnothing$ \\

          \For{$C \in E$}{
            \If{$s(C) \leq \alpha$}{
              $T \gets T \cup \{ C \}$ \\
              $c_{min} \gets c(C)$ if $c(C) < c_{min}$ \\
            }
            \ElseIf{$c(C) < c_{min}$}{
              $S \gets S \cup \{ C \}$ \\
            }
            \Else{
              $I \gets I \cup \{ C \}$ \\
            }
          }

          $S \gets$ the~$k$ most efficient attribute sets~$C$ of~$S$ according to $\frac{\mathrm{c}(A) - \mathrm{c}(C)}{\mathrm{s}(C)}$
        }

        \Return{$\arg \min_{C \in T} \mathrm{c}(C)$}
      \end{algorithm}

      \paragraph{Algorithm Working}
        Algorithm~\ref{alg:asp-greedy} works by exploring $k$~paths of the lattice.
        It starts from the empty set and stops when every path reaches the satisfiability frontier.
        The collection~$S$ holds the attribute sets to expand and is initialized to $k$~empty sets.
        At each stage, the attribute sets to explore are stored in the collection~$E$.
        They consist of each~${S_i \in S}$ with one additional attribute.
        The cost and the sensitivity of each attribute set~${C \in E}$ is then measured.
        If $C$~satisfies the sensitivity threshold~$\alpha$, it is added to the collection~$T$ of the attribute sets that reach the satisfiability frontier, otherwise it is added to the collection~$S$ of the attribute sets to expand.
        Finally, the collection~$S$ is updated to only hold the $k$~most efficient attribute sets.
        The efficiency of an attribute set is the ratio between its gain (i.e., the cost reduction compared to the candidate attributes) and its sensitivity.
        All this process is repeated until~$S$ is empty, when all the $k$ paths have reached the satisfiability frontier.
        The solution is then the attribute set of the lowest cost in~$T$.

      \paragraph{Pruning Methods}
        Three properties allow us to reduce the number of attribute sets that are explored.
        First, we hold the minimum cost~$c_{min}$ of the attribute sets~$T$ that satisfy the sensitivity.
        Any explored attribute set that has a cost higher than~$c_{min}$ is not added to the collection~$S$ of those to explore.
        Indeed, this attribute set does not provide the lowest cost, nor do its supersets.
        Then, during the expansion of two attribute sets $S_i$ and $S_j$ of the same size, if $S_i$ satisfies the sensitivity and $S_j$ does not, they can have a common superset~$S_l$.
        In this case, $S_l$ does not need to be explored as it costs more than~$S_i$.
        We store $S_i$ in $I$ so that we can check if an attribute set~$C$ has a subset in $I$, in which case we do not explore $C$.
        The same holds if $S_i$ costs less than $c_{min}$ and $S_j$ costs more than $c_{min}$.

      \paragraph{Algorithm Complexity}
        Starting from the empty set, we have $n$~supersets composed of one more attribute.
        From these $n$~supersets, we update~$S$ to hold at most $k$~attribute sets.
        The attribute sets~${S_i \in S}$ are now composed of a single attribute, and each~$S_i$ has ${n-1}$~supersets composed of one additional attribute.
        At any stage, we have at most ${k n}$~attribute sets to explore.
        This process is repeated at most $n$~times, as we can add at most $n$~attributes, hence the \emph{computational complexity} of the Algorithm~\ref{alg:asp-greedy} is of~${\mathcal{O}(k n^2 \omega)}$, with $\omega$ being the computational complexity of the measures of usability cost and sensitivity of an attribute set.
        The collection~$E$ contains at most ${k n}$~attribute sets (at most $n$~supersets for each ${S_i \in S}$).
        The collections $S$, $T$, and $I$ can contain more sets, but are bounded by the number of explored nodes which is ${k n^2}$.
        The \emph{memory complexity} of the Algorithm~\ref{alg:asp-greedy} is then of~${\mathcal{O}(k n^2)}$.

    \paragraph{Example}
      Table~\ref{tab:example-greedy-lattice-exploration} displays an example of the execution of Algorithm~\ref{alg:asp-greedy} on the lattice presented in Figure~\ref{fig:lattice-example}, with the sensitivity threshold~${\alpha=0.15}$ and the number of explored paths~${k=2}$.
      The stage~$i$ corresponds to the state at the end of the $i$-th \texttt{while} loop.
      Initially, the collection~$S$ is~${\{ \varnothing, \varnothing \}}$.
      At stage~$1$, the two most efficient attribute sets of~$E$ are ${\{ 1 \}}$ and ${\{ 3 \}}$, which are stored into~$S$.
      At stage~$2$, we assume that the attribute set~${\{ 1, 2 \}}$ is measured first as there is no order among~$E$.
      In this case, this attribute set is added to the collection~$T$, and the minimum cost is now~$20$.
      The attribute set~$\{ 1, 3 \}$ is then added to~$S$, but $\{ 2, 3 \}$ is not as it has a higher cost than the minimum cost.
      At stage~$3$, the attribute set~${\{ 1, 2, 3 \}}$ is not added to the collection~$E$ as it is a superset of one attribute set of the collection~$T$.
      The final solution is the less costly attribute set of~$T$, which is~${\{ 1, 2 \}}$ in this case, and happens to be the optimal solution.

      \begin{table}
        \centering
        \begin{tabular}{lllll}
          \toprule
            Stage & $E$   & $T$        & $S$ \\
          \midrule
            1    &  $\{ \{ 1 \}, \{ 2 \}, \{ 3 \} \}$  &  $\{ \}$  &  $\{ \{ 1 \}, \{ 3 \} \}$  \\
            2    &  $\{ \{ 1, 2 \}, \{ 1, 3 \}, \{ 2, 3 \} \}$  &  $\{ \{ 1, 2 \} \}$  &  $\{ \{ 1, 3 \} \}$  \\
            3    &  $\{ \}$        &  $\{ \{ 1, 2 \} \}$  &  $\{ \}$      \\
          \bottomrule
        \end{tabular}

        \caption{
          Example of the execution of Algorithm~\ref{alg:asp-greedy} on the lattice of Figure~\ref{fig:lattice-example}, with the sensitivity threshold~${\alpha=0.15}$ and the number of explored paths~${k=2}$.
          Stage~$i$ is the state at the end of the $i$-th \texttt{while} loop.
        }
        \label{tab:example-greedy-lattice-exploration}
      \end{table}

  \subsection{Illustrative Measures of Sensitivity and Usability Cost}
    In this section, we illustrate a sensitivity measure as the proportion of impersonated users given the strongest attacker of our model that knows the fingerprint distribution among the protected users.
    We also illustrate a usability cost measure according to the fingerprints generated by a fingerprinting probe on a browser population.

    \subsubsection{Sensitivity Measure}
    \label{sec:sensitivity}
      We measure the sensitivity of a given attribute set according to an instantiated attacker and a population of users sharing browser fingerprints.
      The attacker knows the fingerprint distribution of the protected users, and submits orderly the most probable fingerprints, until reaching the threshold on the number of submissions.
      The illustrative \emph{sensitivity measure} evaluates the proportion of users that are impersonated considering the matching function.

      From an attribute set~$C$, we retrieve the fingerprint domain~$F_C$ such that ${F_C = \prod_{a \in C} \mathrm{domain}(a)}$, with ${\mathrm{domain}(a)}$~being the domain of the attribute~$a$ and $\prod$~being the Cartesian product.
      We denote $F_{A}$ the fingerprints when considering the set~$A$ of the candidate attributes.
      We denote $U$ the set of the users that are protected by the verifier.
      The set ${\mathcal{M} = \{ (u, f) : u \in U, f \in F_{A} \}}$ represents the mapping from the users to their fingerprint, so that the user~$u$ has the fingerprint~$f$ stored.

      We denote ${\mathrm{project(f, C)}}$ the function that projects the fingerprint ${f \in F_{C'}}$ from the set of attributes~$C'$ to the set of attributes~$C$, with the requirement that ${C \subseteq C'}$.
      Finally, the function denoted ${\mathrm{dictionary}(p, F_C, \beta)}$ retrieves the $\beta$-most probable fingerprints of $F_C$ given the probability mass function~$p$.
      We note that it is trivial to retrieve the distribution of the fingerprints composed of any attribute subset~${C \subset A}$ from the distribution of the fingerprints composed of the candidate attributes~$A$.

      We denote ${f[a]}$ the value of the attribute~$a$ for the fingerprint~$f$, and ${f[a] \approx^a g[a]}$ the matching between the value of the attribute~$a$ for the stored fingerprint~$f$ and the submitted fingerprint~$g$.
      It is true only if ${f[a]}$ matches with ${g[a]}$, meaning that ${g[a]}$ is deemed a legitimate evolution of ${f[a]}$.
      Finally, we define the set of the matching functions of each candidate attribute as ${\Phi = \{ \approx^a : a \in A \}}$.

      We measure the sensitivity as the proportion of impersonated users among a population of protected users, against the attacker that knows the fingerprint distribution among them, using Algorithm~\ref{alg:sensitivity-measure}.
      The illustrative sensitivity measure is \emph{mo\-no\-to\-nic} as demonstrated in Appendix~\ref{app:sensitivity-usability-cost-monotonicity-demonstrations}.

      The \emph{number of submissions} is defined by the verifier according to her rate limiting policy~\cite{GSD18} (e.g., blocking the account after three failed attempts).
      This limit could be set to $1$ as a user cannot mistake his browser fingerprint.
      However, a user can browse from a new or a public browser, and taking preventive action on this sole motive is unreasonable.

      \begin{algorithm}
        \caption{Illustrative sensitivity measure.}
        \label{alg:sensitivity-measure}

        \KwData{The attribute set~$C$, the limit on the number of submissions~$\beta$, the mapping~$\mathcal{M}$ from the users to their browser fingerprint, the probability mass function~$p$, and the set~$\Phi$ of matching functions.}
        \KwResult{The proportion of impersonated users.}

        $R \gets \{ \}$ \\
        $F_C \gets $ the fingerprint domain when considering~$C$ \\
        $V \gets \mathrm{dictionary}(p, F_C, \beta)$ \\

        \ForAll{$(u, f^{*}) \in \mathcal{M}$}{
          $f \gets \mathrm{project}(f^{*}, C)$ \\
          \If{$\exists g \in V$ st. $\forall a \in C, f[a] \approx^a g[a]$}{
            $R \gets R \cup \{ u \}$ \\
          }
        }

        \Return{$\frac{\mathrm{card}(R)}{\mathrm{card}(U)}$}
      \end{algorithm}

    \subsubsection{Usability Cost Measure}
    \label{sec:usability-cost}
      There is no off-the-shelf measure of the usability cost of the attributes (e.g., the \texttt{UserAgent} HTTP header~\cite{RFC7231UserAgent} has no specified size, collection time, nor change frequency).
      This cost also depends on the fingerprinted population (e.g., mobile browsers generally have fewer plugins than desktop browsers, resulting in smaller values for the list of plugins~\cite{AALG21}).
      As a result, we design an illustrative cost measure that combines three sources of cost (i.e., space, time, and instability), which is computed by the verifier on her fingerprint dataset.
      The \emph{fingerprint dataset} used to measure the costs is denoted ${D = \{ (b, f) : b \in B, f \in F_{A} \}}$, with $B$ being the set of observed browsers.

      The \emph{memory cost} is measured as the average fingerprint size.
      The attribute values are stored and not compressed into a single hash, which is necessary due to their evolution through time.
      We denote ${\mathrm{mem}(C, D)}$ the \emph{memory cost} of the attribute set~$C$, and ${\mathrm{size}(x)}$ the size of the value~$x$.
      The memory cost is defined as
      \begin{equation}
        \mathrm{mem}(C, D) =
          \frac{1}{\mathrm{card}(D)}
          \sum_{(b, f) \in D} \sum_{a \in C} \mathrm{size}(f[a])
      \end{equation}

      The \emph{temporal cost} is measured as the average fingerprint collection time, and takes into account the asynchronous collection of some attributes.
      Although attributes can be collected asynchro\-nous\-ly, some require a non-negligible collection time (e.g., the dynamic attributes~\cite{MS12, QF19}).
      We denote ${\mathrm{time}(C, D)}$ the temporal cost of the attribute set~$C$.
      Let~$A_{\mathrm{seq}}$ be the set of the sequential attributes, and $A_{\mathrm{async}}$ the set of the asynchronous attributes, so that we have ${C = A_{\mathrm{seq}} \cup A_{\mathrm{async}}}$.
      Let~${\mathrm{t}(b, f[a])}$ be the collection time of the attribute~$a$ for the fingerprint~$f$ collected from the browser~$b$.
      The temporal cost is defined as
      \begin{equation}
        \begin{split}
          \mathrm{time}(C, D) =
            \frac{1}{\mathrm{card}(D)}
            \sum_{(b, f) \in D}
            \max(
              & \{ \mathrm{t}(b, f[a]) : a \in A_{\mathrm{async}} \} \\
              & \cup \{ \sum_{s \in A_{\mathrm{seq}}} \mathrm{t}(b, f[s]) \}
            )
        \end{split}
      \end{equation}

      The \emph{instability cost} is measured as the average number of changing attributes between two consecutive observations of the fingerprint of a browser.
      We denote ${\mathrm{ins}(C, D)}$ the instability cost of the attribute set~$C$.
      We denote ${\mathcal{C}(D)}$ the non-empty set of the consecutive fingerprints coming from the same browser in the dataset~$D$, and ${\delta(x, y)}$~the Kronecker delta being $1$~if $x$~equals~$y$ and $0$~otherwise.
      The instability cost is defined as
      \begin{equation}
        \mathrm{ins}(C, D) =
          \frac{1}{\mathrm{card}(\mathcal{C}(D))}
          \sum_{(f, g) \in \mathcal{C}(D)} \sum_{a \in C}
            \delta(f[a], g[a])
      \end{equation}

      The three dimensions of the cost are weighted by a three-dim\-en\-sion\-al \emph{weight vector} denoted ${\gamma = [\gamma_1, \gamma_2, \gamma_3]}$ such that the weights are strictly positive numbers.
      The verifier tunes these weights according to her needs (e.g., allowing fingerprints to be more unstable, but requiring a shorter collection time).
      She can do this by defining an equivalence between the three dimensions (e.g., one millisecond of collection time is worth ten kilobytes of size), and setting the weights so that these values amount to the same quantity in the total cost.
      For a concrete example, we refer to Section~\ref{sec:usability-cost-measure}.

      Finally, we denote ${\mathrm{cost}(C, D)}$ the cost of the attribute set~$C$ given the fingerprint dataset~$D$.
      The illustrative usability cost measure is \emph{monotonic} as demonstrated in Appendix~\ref{app:sensitivity-usability-cost-monotonicity-demonstrations}, and is formalized as
      \begin{equation}
        \mathrm{cost}(C, D) = \gamma \cdot [
          \mathrm{mem}(C, D), \mathrm{time}(C, D), \mathrm{ins}(C, D)
        ]^\intercal
      \end{equation}

\section{Experimental Validation}
\label{sec:experimental-validation}
  In this section, we describe the experiments that we perform to validate our framework.
  We begin by presenting the fingerprint dataset that is used, and describing how the usability cost and the matching function are implemented.
  Then, we present the results of the attribute selection framework executed with different parameters, and compare them with the results of the common baselines.
  The experiments were performed on a desktop computer with $32$GB of RAM and $32$~cores running at $2$GHz.

  \subsection{Fingerprint Dataset}
    The fingerprint dataset used in this work is the same dataset as the one studied by Andriamilanto et al. in~\cite{AAL21, AALG21}.
    It was collected from December~$7$, $2016$, to June~$7$, $2017$, during a real-life experiment in which the authors integrated a fingerprinting probe to two pages of one of the $15$~most visited websites in France.
    We refer to their studies~\cite{AAL21, AALG21} that provide an in-depth analysis of this dataset, a comprehensive description of the fingerprint collection, a precise description of the preprocessing steps that include a cookie resynchronization process similar to~\cite{ECK10}, and an exhaustive list of the attributes with their properties.
    In a nutshell, the preprocessed dataset contains $5,714,738$~entries (comprising identical fingerprints for a given browser if interleaved\footnote{
      A browser can present interleaved fingerprints~\cite{AALG21} like $a$, $b$, then $a$ again.
      They typically come from a switch between two environments, like a laptop to which an external screen is plugged and unplugged.
      This browser has $3$ entries, but only has $2$ fingerprints ($a$ and $b$) to avoid over counting.
      These interleaved fingerprints are held when measuring the instability cost.
    }) and $4,145,408$~fingerprints (no identical fingerprint counted for the same browser), that are collected from $1,989,366$~browsers.
    The instability is evaluated from $3,725,373$~pairs of consecutive fingerprints, coming from the $27.53$\% of browsers that have multiple entries.
    The fingerprints are composed of $253$~candidate attributes, of which $49$ attributes are completely correlated with another one~\cite{AALG21}, so that knowing the value of the other attribute allows to completely infer their value.
    Although these attributes are correlated, we cannot simply remove them as the attributes present dissimilar costs that also depend on the attributes that are already considered.

  \begin{figure*}
    \minipage{\columnwidth}
      \centering
      \includegraphics[width=\columnwidth]{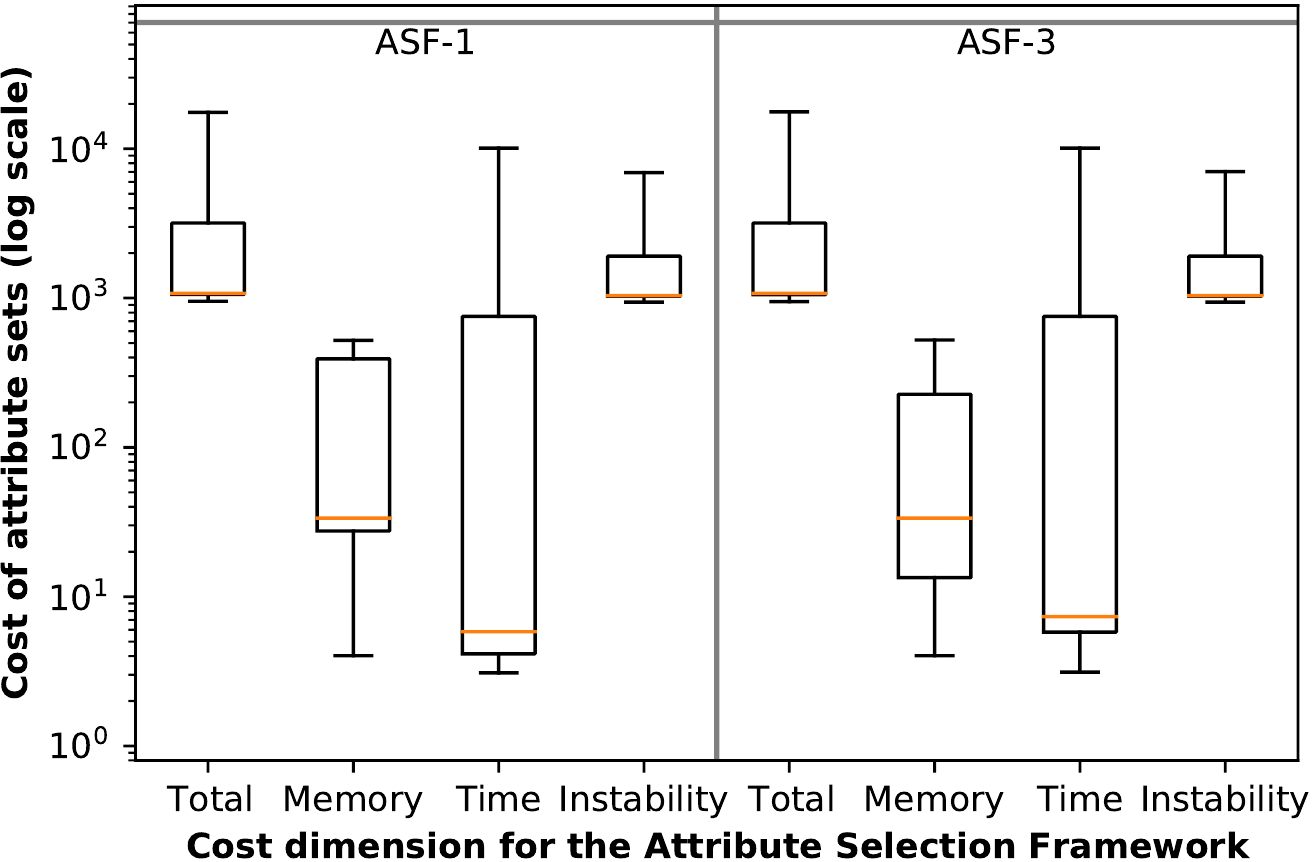}
    \endminipage
    \hfill
    \minipage{\columnwidth}
      \centering
      \includegraphics[width=\columnwidth]{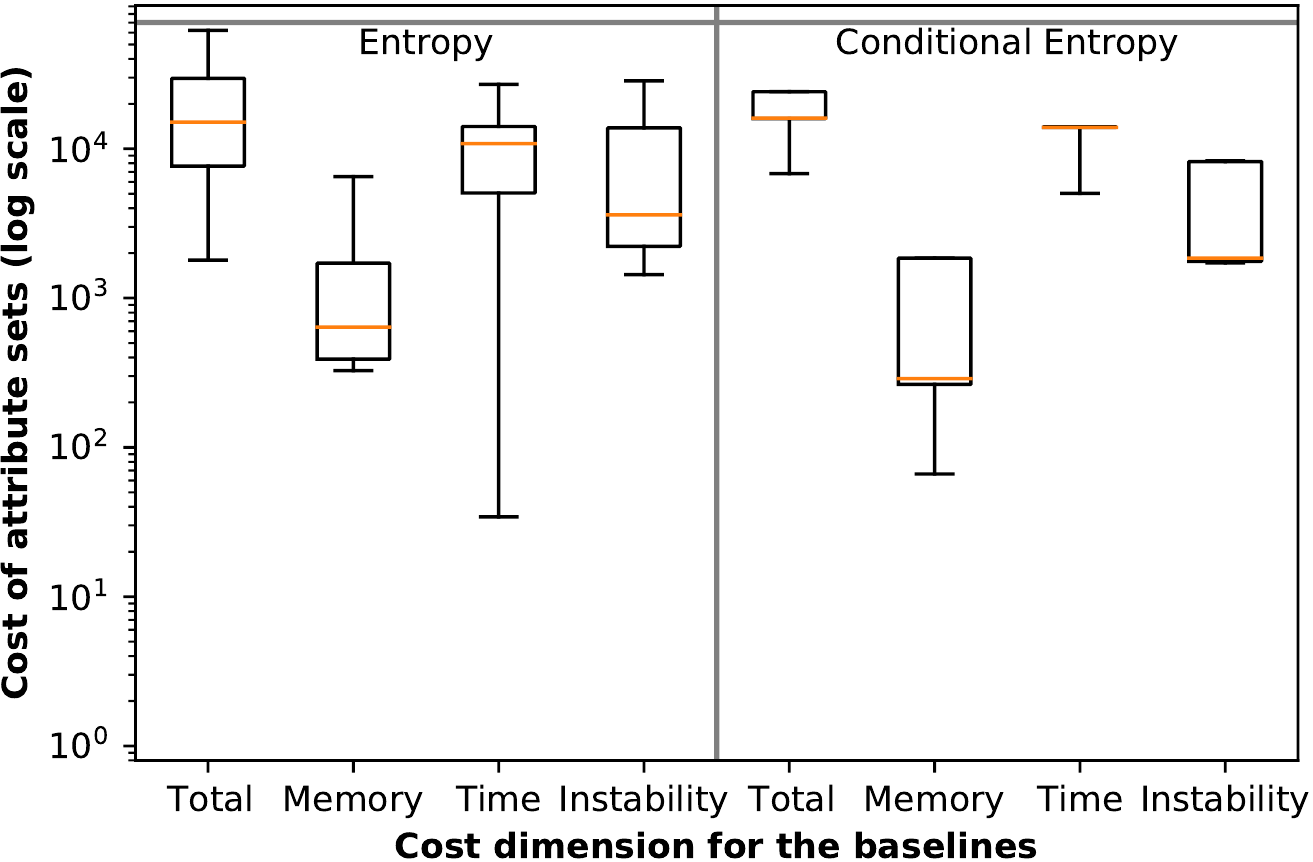}
    \endminipage
    \caption{
      Cost of the attribute sets found by the ASF with $1$ explored path (ASF-1), the ASF with $3$ explored paths (ASF-3), the entropy, and the conditional entropy.
      The costs are in points, so that $10,000$ additional points increases the size of fingerprints by $10$~kilobytes, their collection time by $1$~second, or the number of changing attributes between observations by $1$~attribute, on average.
      A solution exists for $9$ of the $12$ cases.
      The gray horizontal line is the cost when considering all the candidate attributes.
    }
    \label{fig:asf-cost-breakdown}
    \Description[
      Cost of the attribute sets found by the ASF-1, the ASF-3, the entropy, and the conditional entropy.
    ]{
      Cost of the attribute sets found by the ASF-1, the ASF-3, the entropy, and the conditional entropy.
    }
  \end{figure*}

  \subsection{Instantiation of the Experiments}
    In this section, we present the instantiation of the parameters for the experiments.
    We first describe how the verifier and the attacker are instantiated by presenting the chosen user population, sensitivity thresholds, and number of submissions.
    Then, we detail the implementation of the usability cost measure and the matching function between fingerprints, alongside the value of the parameters or weights that they use.

    \subsubsection{Verifier Instantiation}
      On the verifier side, we simulate a \emph{user population} by randomly sampling $30,000$~browsers from the first month of the experiment to represent a medium-sized website.
      The observed fingerprint is considered as the fingerprint stored for the user who owns the browser.
      We configure the resolution algorithm to have $1$ and $3$ \emph{explored paths} to compare the gain achieved by a larger explored space.
      We call \emph{ASF-1} and \emph{ASF-3} our attribute selection method with respectively $1$ and $3$ explored paths.
      We consider the set of \emph{sensitivity thresholds} $\{ 0.001, 0.005, 0.015, 0.025 \}$.
      Bonneau et al.~\cite{BHOS12} defined the resistance against online attacks as a compromise of $1$\% of accounts after a year when $10$ guesses per day are allowed.
      Hayashi et al~\cite{HDCP08} estimated that $0.001$ is equivalent to a random guess of four-digit.
      However, to the best of our knowledge, no standard value exists.
      Hence, we make the choice of these values starting from $0.001$ and going to $0.025$ to obtain a range from a strict security requirement to one that is less strict.
      We admit that $0.025$ ($2.5$\%) is already high, but it is close to the proportion of users that share the $10$ most common passwords in previously leaked datasets~\cite{WZWYH16}.

    \subsubsection{Attacker Instantiation}
      On the attacker side, an instance is parameterized with the number of fingerprints~$\beta$ that he can submit, and his knowledge over the fingerprint distribution.
      We consider the strongest attacker of our attack model that knows the fingerprint distribution among the user population.

      We consider the set of number of submissions $\{ 1, 4, 16 \}$.
      To the best of our knowledge, no standard value exists.
      The choice of $1$ is for a strict rate limiting policy that blocks the account on any failure and asks the user to change his password.
      The choice of $4$ is for a policy that would require a CAPTCHA after $3$ failed attempts, and would perform the blocking and password change after the fourth failed attempt.
      Finally, the choice of $16$ is for a policy that would let more attempts before performing the blocking and password change.
      The chosen values are close to the number of submissions allowed into policies enforced in real life~\cite{GSD18}, and the ones estimated as reasonable values against online guessing attacks~\cite{BON12}.

    \subsubsection{Implementation of the Usability Cost Measure}
    \label{sec:usability-cost-measure}
      The implemented \emph{usability cost function} measures the memory in bytes (a character per byte), the time in milliseconds, and the instability as the average number of changing attributes between the consecutive fingerprints.
      We configure the three-dimensional \emph{weight vector} to the values ${\gamma=[1; 10; 10,000]}$ to have an equivalence between $10$~kilobytes, $1$~second, and $1$~changing attribute on average, which are all equal to $10,000$~points.
      Table~\ref{tab:candidate-set-costs-min-mean-max-attribute-cost} displays the cost of the $253$ candidate attributes, together with the minimum, the average, and the maximum cost of a single attribute for each cost dimension.

      \begin{table}
        \centering
        \begin{tabular}{lllll}
          \toprule
            Value     & Cost (pts) & Memory (B) & Time (s) & Inst. (chgs)   \\
          \midrule
            Candidate & 134,270    & 6,114      & 9.98     & 2.83           \\
          \midrule
            Max. cost & 99,846     & 1,102      & 9.98     & 0.51           \\
            Avg. cost & 8,794      & 26         & 0.87     & 1.13 $10^{-2}$ \\
            Min. cost & 1          & 1          & 0        & 0.00           \\
          \bottomrule
        \end{tabular}

        \caption{
          The cost of the $253$ candidate attributes, together with the maximum, the average, and the minimum cost of a single attribute for each cost dimension.
        }
        \label{tab:candidate-set-costs-min-mean-max-attribute-cost}
      \end{table}

    \subsubsection{Implementation of the Matching Function}
    \label{sec:implementation-of-the-matching-function}
      The implemented \emph{matching function} checks that the distance between the attribute values of the submitted fingerprint and the stored fingerprint is below a threshold.
      Similarly to previous studies~\cite{ECK10, KCFLLL17, VLRR18}, we consider a distance measure that depends on the type of the attribute.
      The minimum edit distance~\cite{JM09MinimumEditDistance} is used for the textual attributes, the Jaccard distance~\cite{WWWLY16} is used for the set attributes, the absolute difference is used for the numerical attributes, and the reverse of the Kronecker delta (i.e., ${1 - \delta(x, y)}$) is used for the categorical attributes.
      The dynamic attributes (e.g., HTML5 canvas) are matched identically (i.e., using a threshold of $1$) as they serve the challenge-response mechanism.
      More complex matching functions exist (e.g., based on rules and machine learning~\cite{VLRR18}).
      They can be integrated to the framework as long as they are monotonic\footnote{
        A matching function is monotonic if two fingerprints that match for an attribute set~$C$ also match for any subset of~$C$.
      }.

      The \emph{distance threshold} for each attribute is set using Support Vector Machines (SVM)~\cite{HDOPS98} and the following methodology.
      First, we split our dataset into $6$~samples (one for each month) and extract the positive and negative classes.
      They respectively consist into the consecutive fingerprints of a browser, and two randomly picked fingerprints of different browsers.
      We assume that a user spends at most one month between each connection, and otherwise would accept to process a heavier fingerprint update process.
      Then, for each attribute, we train an SVM model on the two classes of each monthly sample, and extract the threshold from the resulting hyperplane.
      Finally, we compute the average of the $6$~obtained thresholds to get the distance threshold for each attribute.

    \subsubsection{Baselines}
      We compare our method with common attribute selection methods.
      The entropy-based method~\cite{MEN11, KZW15, BRC16, HRA18} consists into picking the attributes of the highest entropy until reaching an arbitrary number of attributes.
      The method based on the conditional entropy~\cite{FE15} consists into iteratively picking the most entropic attribute according to the attributes that are already chosen, and re-evaluating the conditional entropy of the remaining attributes at each step, until an arbitrary number of attributes is reached.
      Instead of limiting to a given number of attributes, we pick attributes until the obtained attribute set satisfies the sensitivity threshold.
      For simplification, we call \emph{entropy} and \emph{conditional entropy} the attribute selection methods that rely on these two metrics.

  \begin{figure*}
    \minipage{\columnwidth}
      \centering
      \includegraphics[width=\columnwidth]{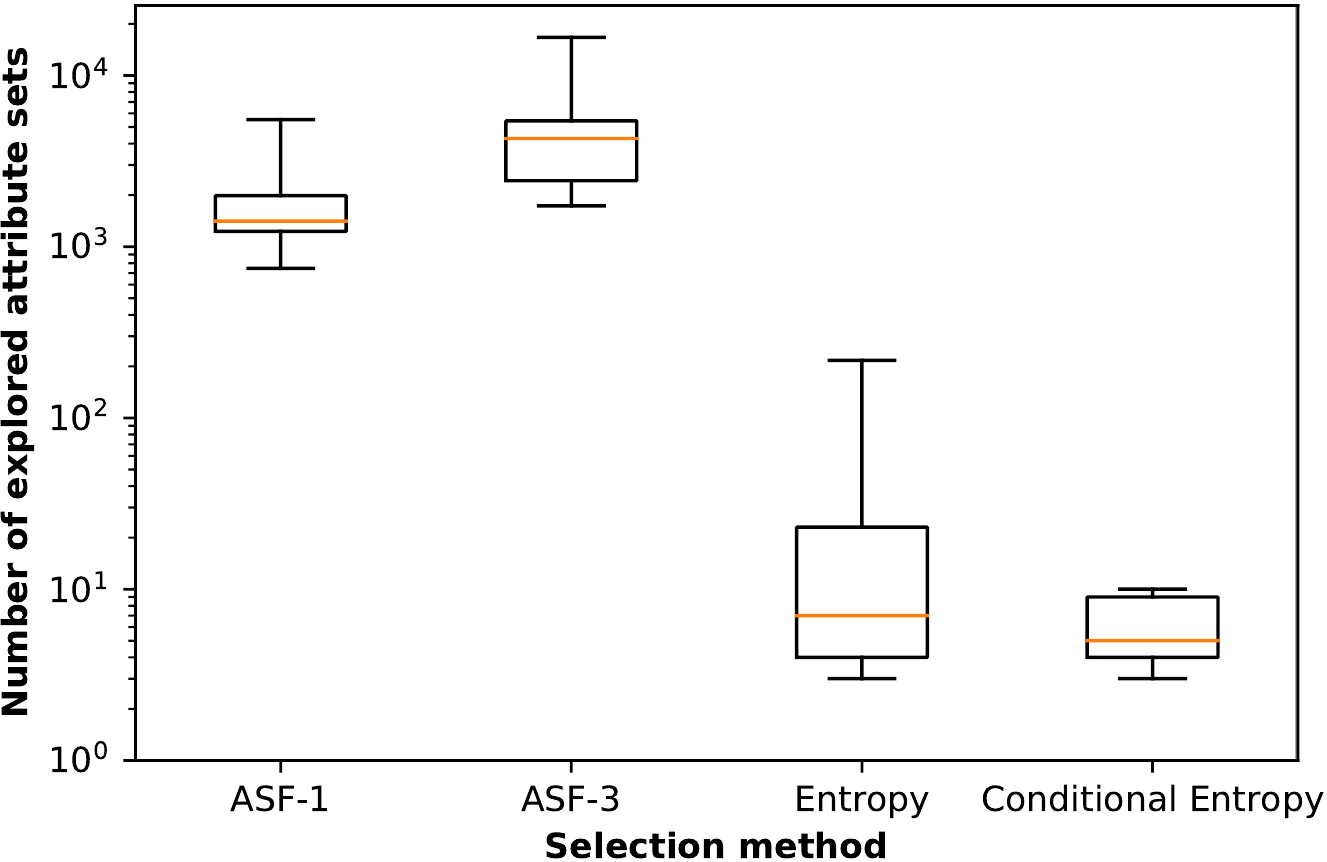}
      \caption{
        Number of explored attribute sets by the attribute selection methods.
      }
      \label{fig:number-of-explored-attribute-sets}
      \Description[
        Number of explored attribute sets by the attribute selection methods.
      ]{
        Number of explored attribute sets by the attribute selection methods.
      }
    \endminipage
    \hfill
    \minipage{\columnwidth}
      \centering
      \includegraphics[width=\columnwidth]{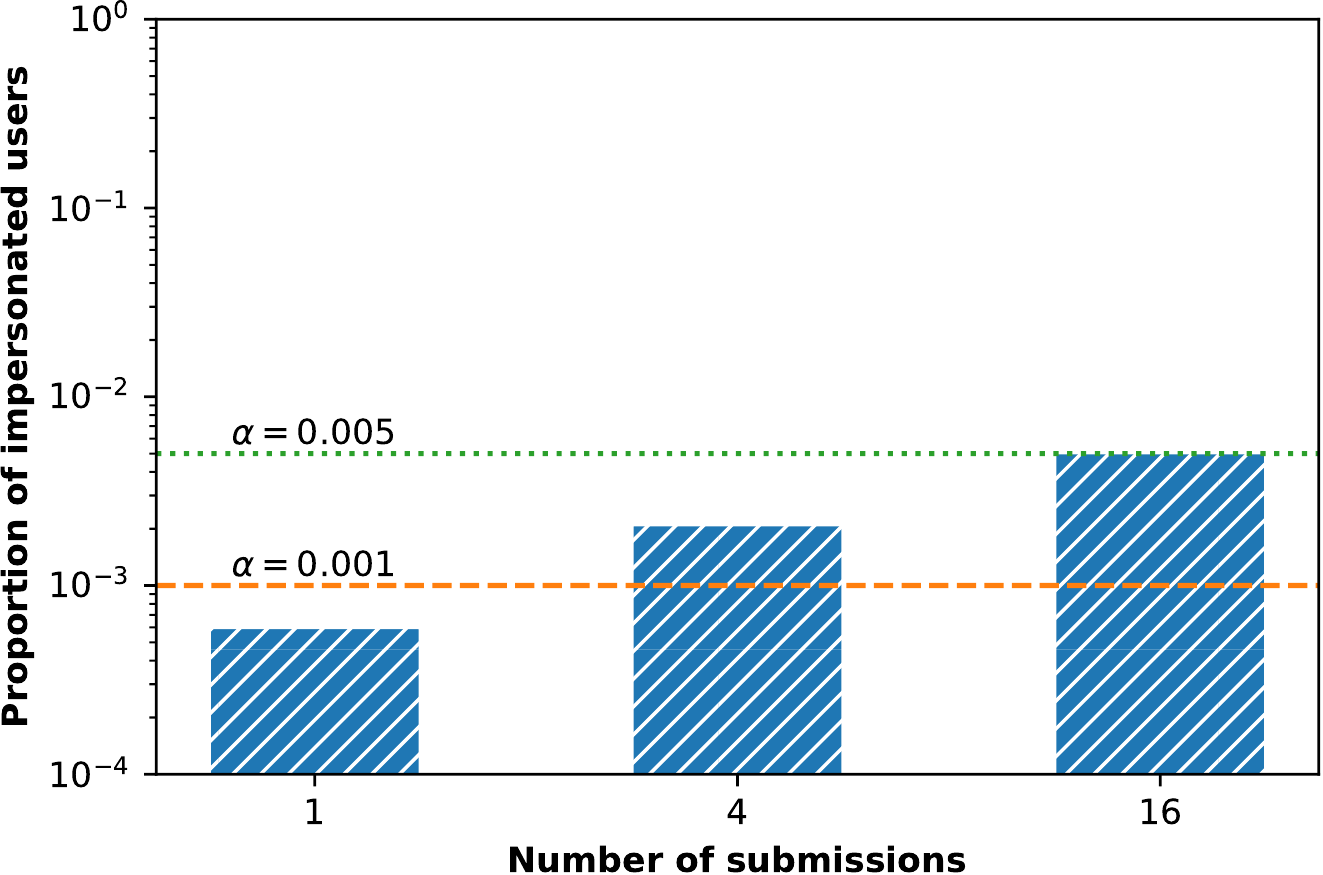}
      \caption{
        Proportion of impersonated users among the $30,000$ users, considering the candidate attributes, the matching function, and as a function of the number of submissions.
        The sensitivity thresholds $\alpha$ that have no solution for some cases are displayed.
      }
      \label{fig:minimum-sensitivity}
      \Description[
        Proportion of impersonated users considering the candidate attributes, the matching function, and as a function of the number of submissions.
      ]{
        Proportion of impersonated users considering the candidate attributes, the matching function, and as a function of the number of submissions.
      }
    \endminipage
  \end{figure*}

  \subsection{Attribute Selection Framework Results}
  \label{sec:attribute-selection-framework-results}
    In this section, we present the results obtained on the previously presented dataset by processing the Attribute Selection Framework on the instantiated attackers, and compare them with the results of the baselines.
    The results are obtained for the $12$ cases consisting of the Cartesian product between the values of the sensitivity threshold $\alpha$ and those of the number of submissions $\beta$.
    We present here the obtained results and discuss the attributes that are the most selected by the framework.
    The exhaustive list of the selected attributes is provided in Appendix~\ref{app:selected-attributes}.

    \subsubsection{Key Results}
    \label{sec:key-results}
      The attribute sets found by the Attribute Selection Framework (ASF) generate fingerprints that are $12$ to $1,663$ times smaller, $9$ to $32,330$ times faster to collect, and with $4$ to $30$ times less changing attributes between two observations, compared to the candidate attributes and on average.
      Compared to the attribute sets found by the baselines, the ones found by the ASF-1 generate fingerprints that are up to $97$ times smaller, are collected up to $3,361$ times faster, and with up to $7.2$ times less changing attributes between two observations, on average.
      These gains come with a higher computation cost, as the ASF-1 explores more attribute sets by three orders of magnitude compared to the baselines.
      However, the implemented attribute sets can be updated rarely, and the usability gain is reflected on each authentication performed by each user.

      Increasing the number of explored paths by the ASF to three does not significantly change the results.
      The attribute sets found by the ASF-3 can have a lower usability cost, or a higher usability cost due to local optimum (see Section~\ref{sec:reasons-for-suboptimal-results}).
      We show that even when considering all of our candidate attributes, the strongest attacker that is able to submit $4$ fingerprints can impersonate $63$ users out of the $30,000$ users of our sample.
      If this attacker is able to submit $16$ fingerprints, this number increases to $152$ users.

    \subsubsection{Results of the Attribute Selection Framework}
      Figure~\ref{fig:asf-cost-breakdown} displays the cost of the attribute sets found by the ASF with $1$ explored path (ASF-1), the ASF with $3$ explored paths (ASF-3), the entropy, and the conditional entropy.
      The costs are in points, so that $10,000$ additional points increase the size of fingerprints by $10$~kilobytes, their collection time by $1$~second, or the number of changing attributes between observations by $1$~attribute, on average.
      There is a solution for $9$ out of the $12$ cases.
      The cases without a solution are discussed in Section~\ref{sec:lower-bound-impersonated-users}.

      Half of the attribute sets found using the ASF-1 generate fingerprints that, on average, have a size lower than $34$~bytes (less than $522$~bytes for all sets), are collected in less than $0.59$ms (less than $1.01$~seconds for all sets), and have less than $0.02$ changing attributes between two observations (less than $0.07$ attributes for all sets).
      Compared to the candidate attributes and on average, the generated fingerprints are $12$ to $1,663$ times smaller, $9$ to $32,330$ times faster to collect, and with $4$ to $30$ times less changing attributes between two observations.

      The difference in usability cost of the attribute sets found by the ASF-3 and the ASF-1 is negligible.
      The attribute sets found by the ASF-3 are as less costly as they are more costly than the attribute sets found by the ASF-1.
      This results in the median additional cost of each dimension being zero.
      The average resulting fingerprint is from $198$~bytes smaller to $30$~bytes larger, takes from $3$ms less to collect to $0.3$ms more, and has from $0.03$ less changing attributes to $0.04$ more.
      Exploring more paths can counter-intuitively provide a higher usability cost, due to the local optimum problem described in Section~\ref{sec:reasons-for-suboptimal-results}.
      Indeed, when exploring more nodes, the followed paths can diverge as we hold more temporary solutions, which can be local optimum.
      The computation cost of increasing the number of explored paths is not worth the expected gain in our experimental setup.

    \subsubsection{Comparison with the Baselines}
      The ASF-1 finds attribute sets that consume less resources than the baselines in all the $9$ cases having a solution.
      The attribute sets found by the \emph{entropy} consume more resources than the attribute sets found by the ASF-1, with a total cost from $1.8$ to $14$ times higher.
      The average generated fingerprint by the attribute sets chosen by the entropy, compared to the attribute sets chosen by the ASF-1, is from $1.6$ to $97$ times larger, has a collection time that is from $1.5$ to $1,872$ times higher, and has from $1.5$ to $7.2$ times more changing attributes between the consecutive fingerprints.
      The attribute sets found by the \emph{conditional entropy} consume more resources than the attribute sets found by the ASF-1, with a total cost from $1.3$ to $15$ times higher.
      The average generated fingerprint by the attribute sets chosen by the conditional entropy, compared to the attribute sets chosen by the ASF-1, is from $1.5$ to $16$ times larger, has a collection time that is from $1.3$ to $3,361$ times higher, and has from $1.1$ to $4.7$ times more changing attributes between the consecutive fingerprints.

    \subsubsection{Reasons for Sub-optimal Results}
    \label{sec:reasons-for-suboptimal-results}
      The sub-optimal solutions that are found by the Attribute Selection Framework are due to a problem of local optimum.
      At a given step, the most efficient attribute sets~$S$ can have supersets of higher cost than the supersets of another~$S'$.
      The supersets of~$S$ are then explored, whereas the less costly supersets here would have been the supersets of~$S'$.

    \subsubsection{Computation Cost}
      The attribute selection framework has a higher computation cost.
      Indeed, at each stage of the exploration, the ASF explores up to ${n-1}$ attribute sets\footnote{
        The set of the explored attribute sets can overlap as two temporary solutions can have a common superset.
      } for each temporary solution, with $n$ being the number of candidate attributes.
      The ASF-1 explores more attribute sets by three orders of magnitude compared to the baselines.
      However, this is an upper bound as the baselines require preprocessing.
      Indeed, the attributes has to be sorted by their entropy or by their conditional entropy.

      Figure~\ref{fig:number-of-explored-attribute-sets} displays the number of attribute sets explored by the attribute selection methods.
      The number of explored attribute sets by the ASF-1 goes from $748$ to $5,522$.
      The ASF-3 explores approximately $3$ times more attribute sets than the ASF-1: from $1,730$ to $16,659$ explored attribute sets.
      The number of attribute sets explored by the entropy ranges from $3$ to $217$, and from $3$ to $10$ for the conditional entropy.
      However, the conditional entropy method requires to sort the $n$ attributes by their conditional entropy, which requires ${\sum_{i=0}^n {n-i}}$ steps.
      This difference of explored attribute sets between the entropy and the conditional entropy is explained by the latter avoiding selecting correlated attributes.

    \subsubsection{Lower Bound on the Impersonated Users}
    \label{sec:lower-bound-impersonated-users}
      The obtained sensitivity against our instantiated attackers ranges from the minimum sensitivity when considering the candidate attributes, as displayed in Figure~\ref{fig:minimum-sensitivity}, to the maximum sensitivity of~$1.0$ when considering no attribute at all.
      All the possible attribute sets have their sensitivity comprised between these two extremum.
      The instantiated attackers that are allowed $4$ submissions are able to impersonate $63$ users out of the $30,000$ users, which exceeds the sensitivity threshold of $0.001$.
      When allowed $16$ submissions, the number of impersonated users increases to $152$, which exceeds the sensitivity threshold of $0.005$ that corresponds to $150$ users.

    \subsubsection{Selected Attributes}
      We have $9$ combinations of sensitivity threshold and number of submissions that show a solution.
      The attribute selection framework is executed twice with two number of explored paths ($1$ and $3$), hence we have $18$ cases for which it found a solution.
      We discuss below the six most selected attributes that are selected in more than five cases.
      We remark that they concern hardware and software components that we expect to not be correlated.
      Indeed, we do not expect a strong link to exist between the browser window size, the number of logical processor cores, the graphics driver, the browser version, a scheme drawn in the browser, and the type of network connection.
      The results about the attributes that we present here come from their analysis by Andriamilanto et al.~\cite{AALG21}.

      The \texttt{innerHeight} property of the \texttt{window} Java\-Script object provides the height of the visible part of the browser window.
      It is selected in all the cases, and is the first attribute to be selected during the exploration as it provides the highest efficiency.
      Indeed, its usability cost is low with, on average, a size of $3.02$~bytes, a collection time of $0.14$ms, and a change between $9.38$\% of the observations.
      Moreover, it is the fifth most distinctive attribute of our dataset, with an entropy of $8.53$~bits and the most common value being shared by $2.70$\% of the fingerprints.

      The \texttt{hardwareConcurrency} property that is collected from the \texttt{navigator} Java\-Script object provides the number of logical processor cores of the device that runs the browser.
      This attribute is selected in $11$ cases, and shows a high efficiency mostly due to the very low usability cost.
      Indeed, it shows, on average, a size of $1$~byte (the value is majoritarily a single digit), a collection time of $0.17$ms, and a change between $0.11$\% of the observations.
      However, it shows a lower distinctiveness, with an entropy of $1.88$~bits and the most common value being shared by $39.64$\% of the fingerprints.

      The \texttt{UNMASKED\_RENDERER\_WEBGL} property of an initialized WebGL Context provides a textual description of the graphics driver.
      This attribute is selected in $8$ cases and shows a high efficiency.
      Indeed, it has, on average, a size of $24.51$~bytes, a collection time of $0.27$ms, and a change between $0.91$\% of the observations.
      Although the most common value is shared by $28.27$\% of the fingerprints, it still provides an entropy of $5.89$~bits.

      The \texttt{appVersion} property of the \texttt{navigator} Java\-Script object provides the version of the browser.
      This attribute is selected in $7$ cases.
      It shows, on average, a size of $101.76$~bytes, a collection time of $0.13$ms, and a change between $1.57$\% of the observations.
      Although the most common value is shared by $22.61$\% of the fingerprints, it still provides an entropy of $7.52$~bits.

      The HTML5 canvas inspired by the AmIUnique study~\cite{LRB16} is selected in $7$ cases, mainly due to its high distinctiveness.
      It has, on average, a size of $63.98$~bytes, a collection time of $71.17$ms, and a change between $1.36$\% of the observations.
      It shows an entropy of $7.76$~bits, and the most common value is shared by $7.09$\% of the fingerprints.

      The \texttt{connection.type} property of the \texttt{navigator} Java\-Script object provides the type of the network connection in use by the browser.
      This attribute is selected in $6$ cases, mainly due to its low usability cost.
      It provides, on average, a size of $1.47$~bytes, a collection time of $0.21$ms, and a change between $0.80$\% of the observations.
      It shows a lower distinctiveness compared to the other attributes, with an entropy of $0.61$~bits and the most common value being shared by $89.52$\% of the fingerprints.

\section{Discussion}
\label{sec:discussion}

  \subsection{Usage of Multiple Browsers}
  \label{sec:usage-of-multiple-browsers}
    Users tend to browse websites using multiple devices, typically a desktop and a mobile device\footnote{
      \url{https://www.javelinstrategy.com/coverage-area/how-online-vs-mobile-shifting-browser-vs-app}
    }.
    FPSelect can be extended to support the usage of multiple browsers by the users, by changing the sensitivity measure so that a user is impersonated if one of his browsers is spoofed by the attacker.
    Using a monotonic matching function (e.g., the matching function described in Section~\ref{sec:implementation-of-the-matching-function}), this sensitivity measure is also monotonic\footnote{
      Indeed, if the fingerprint of one of the user's browsers matches with the fingerprint of the attacker for an attribute set~$C$, it also matches for any subset of~$C$.
    }.
    Boda et al.~\cite{BFGI12} showed that some attributes provide information about the underlying system (e.g., the list of fonts) and can be used for cross-browser fingerprinting.
    Although such technique is interesting in an authentication context (e.g., recognizing the common attributes between the browsers of a user), this is out of the scope of our work.

  \subsection{Update of Attributes through Time}
  \label{sec:update-attributes-through-time}
    The verifier can keep the attributes of the fingerprinting probe up to date by re-executing the framework.
    To do so, she performs a fingerprint collection on a browser population close to the population of her web platform, using a wide-surface of fingerprinting attributes.
    We emphasize that the usability requirement is less strict for such experiment (e.g, the fingerprints can take more time or more space).
    Web technologies do not evolve frequently.
    For example, Andriamilanto et al.~\cite{AALG21} analyze the dataset that we study, and do not observe any significant change over the $6$ months of the experiment.
    Moreover, changing the attribute set requires to update the fingerprint that is stored for each user.
    Hence, the verifier can -- and should -- perform this process rarely (e.g., once per semester or per year).
    Finally, the verifier can monitor the distinctiveness and the stability of the stored fingerprints, and perform an update if a drastic change is detected (e.g., an attribute becomes highly unstable~\cite{KGBRF16} or homogeneous).

  \subsection{Attribute Sets in a Per-Browser Basis}
  \label{sec:per-browser-attribute-sets}
    The attribute set can also be selected in a per-browser basis, but FPSelect is not designed for this.
    However, it is possible to execute FPSelect on subpopulations of browsers (e.g., mobile and desktop browsers) to obtain an attribute set per subpopulation.
    To do so, the whole framework is simply executed on the subpopulation of browsers.
    The sensitivity measure then considers that the attacker focus on this subpopulation (i.e., he knows the fingerprint distribution of this subpopulation).
    The costs are also specifically measured on the subpopulation (e.g., the list of plugins is most of the time empty for the mobile browsers~\cite{SPJ15, AALG21}, hence is less costly for this subpopulation).
    The thresholds of the subsets have to be set so that the overall sensitivity threshold is satisfied.
    The simplest way is to set the thresholds of the subsets to the overall threshold.
    Indeed, if less than $x$ percent of the users of each subset are impersonated, less than $x$ percent of the overall users are.

\section{Related Works}
\label{sec:related-works}

  \subsection{Attribute Selection}
    Previous works identify the need to reduce the included attributes, and used various methods to perform the selection.
    Most of the previous works either remove the attributes of the lowest entropy~\cite{VLRR18}, or leverage greedy algorithms that iteratively pick the attributes of the highest weight (typically the entropy) until a threshold (typically on the number of attributes) is reached~\cite{MEN11, KZW15, BRC16, HRA18, THS18}.
    These methods do not consider the correlation that can occur between the attributes.
    Indeed, two attributes can separately have a high entropy, but, when taken together, provide a lower entropy than another attribute set.
    Table~\ref{tab:fingerprint-dataset-example} displays a concrete example of such case.

    To the best of our knowledge, two works take the correlation into account in their attribute selection method.
    Fifield and Egelman~\cite{FE15} weigh the attributes by the conditional entropy given the attributes that are already picked.
    The conditional entropy of each attribute is updated on each turn, and picking two correlated attributes is therefore avoided.
    Pugliese et al.~\cite{PRGB20} propose two attribute selection methods that iteratively pick the attribute that maximizes a criterion, given the attributes that are already chosen.
    The first criterion to maximize is the number of users for which their fingerprints are not shared by any other user and stay identical between at least two observations.
    The second criterion to maximize is the duration for which these fingerprints stay identical.
    These two works only maximize one criterion, and ignore the usability cost of the attributes.
    On the contrary, our framework performs a trade-off between the sensitivity that is tied to the distinctiveness, and the usability cost that has a stability dimension.

    Gulyás et al.~\cite{GAC16} study the problem of finding the set of $s$-items (e.g., fonts, plugins, applications) for which to check the presence on a device, to reduce the number of devices that agree on the same value.
    They prove that this problem is NP-hard, and propose greedy algorithms to find the closest approximation in polynomial time.
    Our problem is different because we do not choose the items to check the presence for, that consist of binary value, but on selecting the categorical attributes to collect.
    Moreover, they seek to reduce the number of collected attributes, and to minimize the users that agree on the same values.
    We seek to reduce the sensitivity against dictionary attackers, and to minimize the usability cost that comprises various aspects.
    Indeed, the attributes are not equal regarding their usability cost (e.g., some are collected almost instantly whereas others take seconds).

    Flood and Karlsson~\cite{FK12} evaluate every possible attribute set obtained from their $13$ attributes to find the set that provides the best classification results.
    An exhaustive search is feasible on a small set of candidate attributes, but unrealistic on a larger set.
    Indeed, there are $2^n$ possible attribute sets for $n$~candidate attributes.

  \subsection{Evaluation of the Sensitivity of an Attribute Set}
    Alaca et al.~\cite{AV16} rate fingerprinting attributes given properties that include the resource usage and the resistance to spoofing.
    They also model attackers according to different strategies and knowledge.
    The rating of the attributes mainly comes from estimations, and most of their spoof-resistant attributes are outside our boundaries on attribute choice.
    Indeed, we only consider the attributes collected via HTTP headers or JavaScript properties, that are accessible without permission and do not directly concern the user (e.g., IP address, geolocation) but rather his web browsing platform.

    Laperdrix et al.~\cite{LABN19} propose a challenge-response mechanism based on dynamic attributes which values also depend on provided instructions (e.g., the HTML5 canvas depends on drawing instructions).
    They identify various attacks that can be executed on an authentication mechanism that includes browser fingerprinting.
    The attacks notably include the submission of the most common fingerprints.
    We also consider the attacker that submits the most common fingerprints following his knowledge, and consider in addition a matching function to measure his reach in a realistic context (i.e., a fingerprint can spoof others that are similar).

\section{Conclusion}
\label{sec:conclusion}
  In this study, we propose FPSelect, a framework for a verifier to tailor his fingerprinting probe by picking the attribute set that limits the sensitivity against an instantiated attacker, and reduces the usability cost.
  We formalize the Attribute Selection Problem that the verifier has to solve, show that it is a generalization of the Knapsack Problem, model the potential solutions as a lattice of attribute sets, and propose a greedy exploration algorithm to find a solution.
  We evaluate FPSelect on a real-life browser fingerprint dataset, and compare it with common attribute selection methods that rely on the entropy and the conditional entropy.
  The attribute sets found by FPSelect generate fingerprints that are $12$ to $1,663$ times smaller, $9$ to $32,330$ times faster to collect, and with $4$ to $30$ times less changing attributes between two observations, compared to the candidate attributes and on average.
  Compared to the baselines, the attribute sets found by FPSelect generate fingerprints that are up to $97$ times smaller, are collected up to $3,361$ times faster, and with up to $7.2$ times less changing attributes between two observations, on average.

  In future works, we will first extend our attack model with the attackers that possess targeted knowledge about users (e.g., the value of fixed attributes, components of their web environment).
  Indeed, the attacker that manages to infer the fingerprint distribution of small subpopulations (e.g., grouped by the operating system), and to link users to a subpopulation, would obtain a more skewed distribution which could help him to extend his reach.
  Moreover, the attacker that has the knowledge of previous challenges of dynamic attributes and the associated responses, can try to forge the response to an unseen challenge using other methods (e.g., image processing for the canvas).
  The study of the ability of these attackers, and the measure of the sensitivity against them, are let as future works.
  Second, the behavior of our framework on other experimental setups (browser population, dataset, measures, parameters) would be interesting, and is let as future works.

\begin{acks}
  We want to thank the anonymous reviewers for their usefull reviews; Benoît Baudry, David Gross-Amblard, Joris Duguépéroux, and Louis Béziaud for their valuable comments; and Alexandre Garel for his work on the experiment.
\end{acks}

\bibliographystyle{ACM-Reference-Format}
\bibliography{main}

\appendix

\section{Demonstrations of Measures Monotonicity}
\label{app:sensitivity-usability-cost-monotonicity-demonstrations}

  \subsection{Monotonicity of the Illustrative Sensitivity}
    \begin{theorem}
      \label{thm:sensitivity-monotonicity}
      Considering the same limit on the number of submissions~$\beta$, the mapping from users to their browser fingerprint~$\mathcal{M}$, the probability mass function~$p$, and the set of matching functions~$\Phi$.
      For any couple of attribute sets $C_k$ and $C_l$ so that ${C_k \subset C_l}$, we have ${\mathrm{s}(C_k) \geq \mathrm{s}(C_l)}$ when measuring the sensitivity using Algorithm~\ref{alg:sensitivity-measure}.
    \end{theorem}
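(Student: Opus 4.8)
\begin{proofsketch}
The plan is to reduce the claim to a statement about probability mass and then show that such mass can only grow under coarsening. Since $\mathrm{s}(C)=\mathrm{card}(R)/\mathrm{card}(U)$ with $\mathrm{card}(U)$ fixed, it suffices to bound the impersonation count. Writing $p_C$ for the marginal of $p$ on $F_C$ (derivable from $p$, as the text notes) and reading the attack model literally, so that $p$ is the empirical fingerprint distribution of the population $\mathcal{M}$, a user is impersonated under probe $C$ exactly when its projected fingerprint falls in the covered region $\mathrm{Cov}_C=\bigcup_{g\in V}\{f\in F_C : f\approx_C g\}$, where I write $f\approx_C g$ for $\forall a\in C,\ f[a]\approx^a g[a]$ and $V=\mathrm{dictionary}(p,F_C,\beta)$. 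Under this reading $\mathrm{s}(C)=p_C(\mathrm{Cov}_C)$, so the target becomes $p_{C_k}(\mathrm{Cov}_{C_k})\ge p_{C_l}(\mathrm{Cov}_{C_l})$ for $C_k\subset C_l$.

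Next I would push the covered region of the finer probe down to the coarser one. Let $\pi(\cdot)=\mathrm{project}(\cdot,C_k)$ and $Q=\{\pi(g):g\in V_l\}$, so $\mathrm{card}(Q)\le\beta$. The monotonicity of $\Phi$ gives the key implication: if $f\approx_{C_l}g$ then the projected pair satisfies $\pi(f)\approx_{C_k}\pi(g)$, because the attribute-wise conjunction over $C_l\supseteq C_k$ entails the one over $C_k$. Hence $\pi(\mathrm{Cov}_{C_l})\subseteq\bigcup_{q\in Q}\{f\in F_{C_k}:f\approx_{C_k}q\}=:\mathrm{Cov}^{Q}_{C_k}$. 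Because taking preimages under $\pi$ only adds mass, $p_{C_l}(\mathrm{Cov}_{C_l})\le p_{C_l}(\pi^{-1}(\pi(\mathrm{Cov}_{C_l})))=p_{C_k}(\pi(\mathrm{Cov}_{C_l}))\le p_{C_k}(\mathrm{Cov}^{Q}_{C_k})$. So $\mathrm{s}(C_l)$ is dominated by the covered mass over $C_k$ of \emph{some} dictionary $Q$ of size at most $\beta$.

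The final step is to compare this incidental dictionary $Q$ with the real one $V_k=\mathrm{dictionary}(p,F_{C_k},\beta)$, and this is where I expect the main obstacle. The algorithm picks $V_k$ as the $\beta$ \emph{most probable} fingerprints, not the $\beta$ of largest \emph{covered} mass. For strict matching -- each fingerprint matching only itself -- covered mass is just $\sum_{g}p_{C_k}(g)$, which the top-$\beta$ choice maximizes over all dictionaries of size at most $\beta$; thus $p_{C_k}(\mathrm{Cov}^{Q}_{C_k})\le p_{C_k}(\mathrm{Cov}_{C_k})=\mathrm{s}(C_k)$ and the chain closes cleanly. For a general monotonic matching function the match-balls overlap, a probability-optimal $V_k$ need not be coverage-optimal, and the inequality $p_{C_k}(\mathrm{Cov}^{Q}_{C_k})\le p_{C_k}(\mathrm{Cov}_{C_k})$ is no longer automatic. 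I would therefore settle the equality-matching case first and then try to isolate the exact property of $\Phi$ (or of the ordering induced by $p$) needed to make top-$\beta$-by-probability also coverage-dominant. I would explicitly avoid the tempting shortcut of proving the set inclusion $R_{C_l}\subseteq R_{C_k}$: once $C_k$ and $C_l$ induce different dictionaries, a user caught under $C_l$ can escape under $C_k$, so only the aggregate mass, and not membership, is preserved.
\end{proofsketch}
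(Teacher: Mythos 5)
Your reduction is correct as far as it goes, and it is more careful than the paper's own argument. The paper's proof sketch does exactly what you warn against at the end: it projects the dictionary $V$ computed for $C_l$ down to $C_k$, observes that every fingerprint spoofed under $C_l$ is spoofed by the projected dictionary $W$ under $C_k$, and then treats the collision of projected entries (``more fingerprints can be added to the dictionary until reaching the submission limit~$\beta$'') as the only discrepancy between $W$ and the dictionary that Algorithm~\ref{alg:sensitivity-measure} actually uses for $C_k$, namely ${\mathrm{dictionary}(p, F_{C_k}, \beta)}$. It never justifies that the coverage of the latter dominates the coverage of $W$; indeed the projection of the $\beta$ most probable fingerprints of $F_{C_l}$ need not even be contained in the $\beta$ most probable fingerprints of the marginal on $C_k$. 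Your chain ${\mathrm{s}(C_l) \le p_{C_k}(\mathrm{Cov}^{Q}_{C_k})}$ with ${Q = \pi(V_l)}$ is exactly the part the paper establishes (modulo notation), and the missing link you isolate --- top-$\beta$-by-probability versus top-$\beta$-by-coverage --- is precisely where both arguments stop being proofs.

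The obstacle is not an artifact of your formulation: with the paper's illustrative matching function it yields a counterexample to the theorem as literally stated. Take ${C_k=\{a\}}$, ${C_l=\{a,b\}}$, ${\beta=1}$, with $a$ numerical matched under ${|x-y|\le 1}$ and $b$ categorical matched exactly (both attribute-wise, hence monotonic); give $6$, $5$, $10$, $9$ users the fingerprints $(0,X)$, $(0,Y)$, $(5,Z)$, $(6,Z)$ respectively. Under $C_l$ the most probable fingerprint is $(5,Z)$, which spoofs the $19$ users at $(5,Z)$ and $(6,Z)$ out of $30$; under $C_k$ the most probable value is ${a=0}$ (mass $11/30$), which spoofs only the $11$ users at ${a=0}$, so ${\mathrm{s}(C_k) < \mathrm{s}(C_l)}$. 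The statement therefore holds unconditionally only for exact matching, where your argument closes because maximizing probability mass is maximizing covered mass, or after redefining the dictionary as the $\beta$ submissions maximizing covered mass rather than the $\beta$ most probable fingerprints --- under that reading $Q$ is an admissible dictionary and ${p_{C_k}(\mathrm{Cov}^{Q}_{C_k}) \le \mathrm{s}(C_k)}$ is immediate. In short, your proof is incomplete exactly where you say it is, but the fault lies with the statement and with the paper's proof rather than with your approach; the ``exact property of $\Phi$'' you propose to search for does not exist short of one of these two repairs.
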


    \begin{proofsketch}
    \label{proof:sensitivity-monotonicity}
      We focus on a fingerprint~${f \in F_{C_l}}$ when considering the attribute set~$C_l$, and the dictionary~$V$ used to attack~$f$.
      We project~$f$ to the attribute set~$C_k$ to obtain the fingerprint~${g \in F_{C_k}}$.
      This fingerprint~$g$ can be attacked using the dictionary~$W$ composed of the fingerprints of~$V$ projected to~$C_k$.
      As the matching function works in an attribute basis, if a fingerprint~${h \in V}$ matches with~$f$, we have ${f[a] \approx^a h[a] : \forall a \in C_l}$ that is true.
      As $C_k$~is a subset of~$C_l$, we also have ${f[a] \approx^a h[a] : \forall a \in C_k}$ that is true.
      The projection of $h$ to $C_k$ then spoofs $g$, hence the fingerprints spoofed when considering~$C_l$ are also spoofed when considering~$C_k$.
      The sensitivity when considering~$C_k$ is therefore at least equal to that when considering~$C_l$.

      When projecting the fingerprints of the attack dictionary to the attribute set~$C_k$, some of them can come to the same fingerprint.
      In which case, more fingerprints can be added to the dictionary until reaching the submission limit~$\beta$.
      This can lead to more spoofed fingerprints and impersonated users.
      The sensitivity when considering~$C_k$ can be higher than when considering~$C_l$.

      For any attribute sets $C_k$ and $C_l$ so that ${C_k \subset C_l}$, we then have ${\mathrm{s}(C_k) \geq \mathrm{s}(C_l)}$ when measuring the sensitivity using Algorithm~\ref{alg:sensitivity-measure}.
    \end{proofsketch}

  \subsection{Monotonicity of the Illustrative Usability Cost}
    \begin{theorem}
    \label{thm:cost-monotonicity}
      For a given fingerprint dataset~$D$, and attribute sets $C_k$ and $C_l$ so that ${C_k \subset C_l}$, we have ${\mathrm{cost}(C_k, D) < \mathrm{cost}(C_l, D)}$.
    \end{theorem}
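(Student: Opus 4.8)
The plan is to use that $\mathrm{cost}(\cdot, D)$ is a sum of its three sub-costs weighted by the strictly positive entries of $\gamma$, so it suffices to show that each sub-cost is monotone non-decreasing under proper inclusion while at least one of them --- the memory cost --- is \emph{strictly} increasing. A single strictly increasing term carrying a positive weight then forces the whole weighted sum to increase strictly.

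First I would write $C_l = C_k \cup \Delta$ with $\Delta = C_l \setminus C_k$, which is non-empty because the inclusion $C_k \subset C_l$ is proper. Both the memory cost and the instability cost are plain additive sums over the attributes of $C$, so each telescopes: the increment from $C_k$ to $C_l$ is exactly the same expression summed over $a \in \Delta$ only. For memory, $\mathrm{mem}(C_l, D) - \mathrm{mem}(C_k, D) = \frac{1}{\mathrm{card}(D)} \sum_{(b, f) \in D} \sum_{a \in \Delta} \mathrm{size}(f[a])$, and since every stored value occupies at least one byte this is \emph{strictly} positive (as $\Delta$ and $D$ are non-empty). For instability, the analogous increment is a sum of Kronecker terms lying in $\{0, 1\}$, hence non-negative.

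The temporal cost is the delicate component, because it is not a plain sum over attributes but a $\max$ between the largest asynchronous collection time and the total sequential collection time. Here I would argue monotonicity structurally: adding a sequential attribute only enlarges the inner sum $\sum_{s \in A_{\mathrm{seq}}} \mathrm{t}(b, f[s])$, whereas adding an asynchronous attribute only adds one more element to the set over which the outer $\max$ is taken; since every $\mathrm{t}(b, f[a])$ is non-negative, in both cases the value of the $\max$ can only stay equal or grow, giving $\mathrm{time}(C_l, D) \geq \mathrm{time}(C_k, D)$.

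Finally I would combine the three by expanding $\mathrm{cost}(C_l, D) - \mathrm{cost}(C_k, D) = \gamma_1 (\mathrm{mem}(C_l, D) - \mathrm{mem}(C_k, D)) + \gamma_2 (\mathrm{time}(C_l, D) - \mathrm{time}(C_k, D)) + \gamma_3 (\mathrm{ins}(C_l, D) - \mathrm{ins}(C_k, D))$: the first summand is strictly positive and the other two are non-negative, so the difference is strictly positive. The main obstacle is precisely the $\max$ in the temporal cost, which blocks the simple telescoping that works for the other two sub-costs; I must instead deduce its monotonicity from the non-negativity of the individual collection times, and note that the strictness of the overall inequality rests solely on the memory term, since the temporal and instability costs can in principle remain unchanged when an added attribute has zero collection time or never varies between observations.
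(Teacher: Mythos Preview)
Your proposal is correct and follows essentially the same approach as the paper: decompose $\mathrm{cost}$ into its three weighted sub-costs, show that memory is strictly increasing while time and instability are merely non-decreasing, and conclude via the strict positivity of the $\gamma_i$. The only cosmetic difference is that the paper argues by adding one attribute at a time and then invokes recursion, whereas you handle the whole difference set $\Delta$ in one shot; similarly, the paper spells out an explicit five-case analysis for the temporal cost where you give the cleaner structural observation that enlarging either the sequential sum or the set inside the $\max$ can only push the $\max$ up.
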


    \begin{proof}
    \label{proof:complete-cost-monotonicity}
      We consider $C_i$ and $C_j$ two attribute sets, so that they differ by the attribute~$a_d$ such that ${C_j = C_i \cup \{ a_d \}}$.
      We consider the fingerprint dataset~$D$ from which the measures are obtained.

      The memory cost of~$C_j$ is strictly greater than the memory cost of~$C_i$.
      As $C_j$ and $C_i$ differ by the attribute~$a_d$, we have
      \begin{equation}
        \mathrm{mem}(C_j, D) =
          \mathrm{mem}(C_i, D) + \frac{
            \sum_{(b, f) \in D} \mathrm{size}(f[a_d])
          }{
            \mathrm{card}(D)
          }
      \end{equation}
      The size of the attribute~$a_d$ is strictly positive, hence we have the inequality ${\mathrm{mem}(C_j, D) > \mathrm{mem}(C_i, D)}$.

      The temporal cost of~$C_j$ is greater than or equal to the temporal cost of~$C_i$, as adding an attribute cannot reduce the collection time.
      The attribute $a_d$~can be sequential or asynchronous, and can take identical or more time than the current longest attribute of~$C_i$.
      Below, we explore all these cases.
      (1)~If $a_d$~is asynchronous, we have the following cases:
      (a)~$a_d$~takes less time than the longest asynchronous attribute~$a_l$, then the collection time is either that of~$a_l$ or the total of the sequential attributes, so $a_d$~does not influence the collection time and ${\mathrm{time}(C_j, D) = \mathrm{time}(C_i, D)}$,
      (b)~$a_d$~takes more time than the longest asynchronous attribute, but less or equal to the total of the sequential attributes, then the maximum is the total of the sequential attributes and ${\mathrm{time}(C_j, D) = \mathrm{time}(C_i, D)}$,
      (c)~$a_d$~takes more time than both the longest asynchronous attribute and the total of the sequential attributes, then the collection time is that of~$a_d$, and ${\mathrm{time}(C_j, D) > \mathrm{time}(C_i, D)}$.
      (2)~If $a_d$~is sequential, we have the following cases:
      (a)~$a_d$~increases the total collection time of the sequential attributes, but the total stays below that of the longest asynchronous attribute, then we have the equality ${\mathrm{time}(C_j, D) = \mathrm{time}(C_i, D)}$,
      (b)~$a_d$~increases the total collection time of the sequential attributes, which is then higher than that of the longest asynchronous attribute, then ${\mathrm{time}(C_j, D) > \mathrm{time}(C_i, D)}$\footnote{
        Either the total collection time of the sequential attributes of $C_i$ does not exceed that of its longest asynchronous attribute, and adding $a_d$ results in the total collection time of the sequential attributes surpassing that of the longest asynchronous attribute.
        In this case, ${\mathrm{time}(C_j, D) > \mathrm{time}(C_i, D)}$.
        Either the total collection time of the sequential attributes of $C_i$ exceeds that of its longest asynchronous attribute.
        As $a_d$ increases the total collection time of the sequential attributes, we have ${\mathrm{time}(C_j, D) > \mathrm{time}(C_i, D)}$.
      }.
      These are all the possible cases, hence ${\mathrm{time}(C_j, D) \geq \mathrm{time}(C_i, D)}$.

      The instability cost of~$C_j$ is greater than or equal to that of~$C_i$, as either $a_d$~is completely stable and ${\mathrm{ins}(C_j, D) = \mathrm{ins}(C_i, D)}$, otherwise $a_d$~is unstable and ${\mathrm{ins}(C_j, D) > \mathrm{ins}(C_i, D)}$.
      We then have ${\mathrm{ins}(C_j, D) \geq \mathrm{ins}(C_i, D)}$.

      As the cost weight vector~$\gamma$ is composed of strictly positive numbers, the cost of~$C_j$ is therefore strictly higher than the cost of~$C_i$ due to the memory cost.
      Recursively, it holds for any~$C_k$ and~$C_l$ so that ${C_k \subset C_l}$, hence the cost is monotonic.
      For any fingerprint dataset~$D$, and attribute sets $C_k$ and $C_l$ so that ${C_k \subset C_l}$, we have ${\mathrm{cost}(C_k, D) < \mathrm{cost}(C_l, D)}$.
    \end{proof}

\section{Usability Cost of Attributes}
\label{app:attributes-usability-cost}
  In this appendix, we discuss the distribution of the three usability cost dimensions among the attributes.
  For more insight into the fingerprints and the attributes of the dataset of this study, we refer to the study of Andriamilanto et al.~\cite{AALG21} which includes an analysis of this dataset.

  \begin{figure}
    \centering
    \includegraphics[width=0.8\columnwidth]{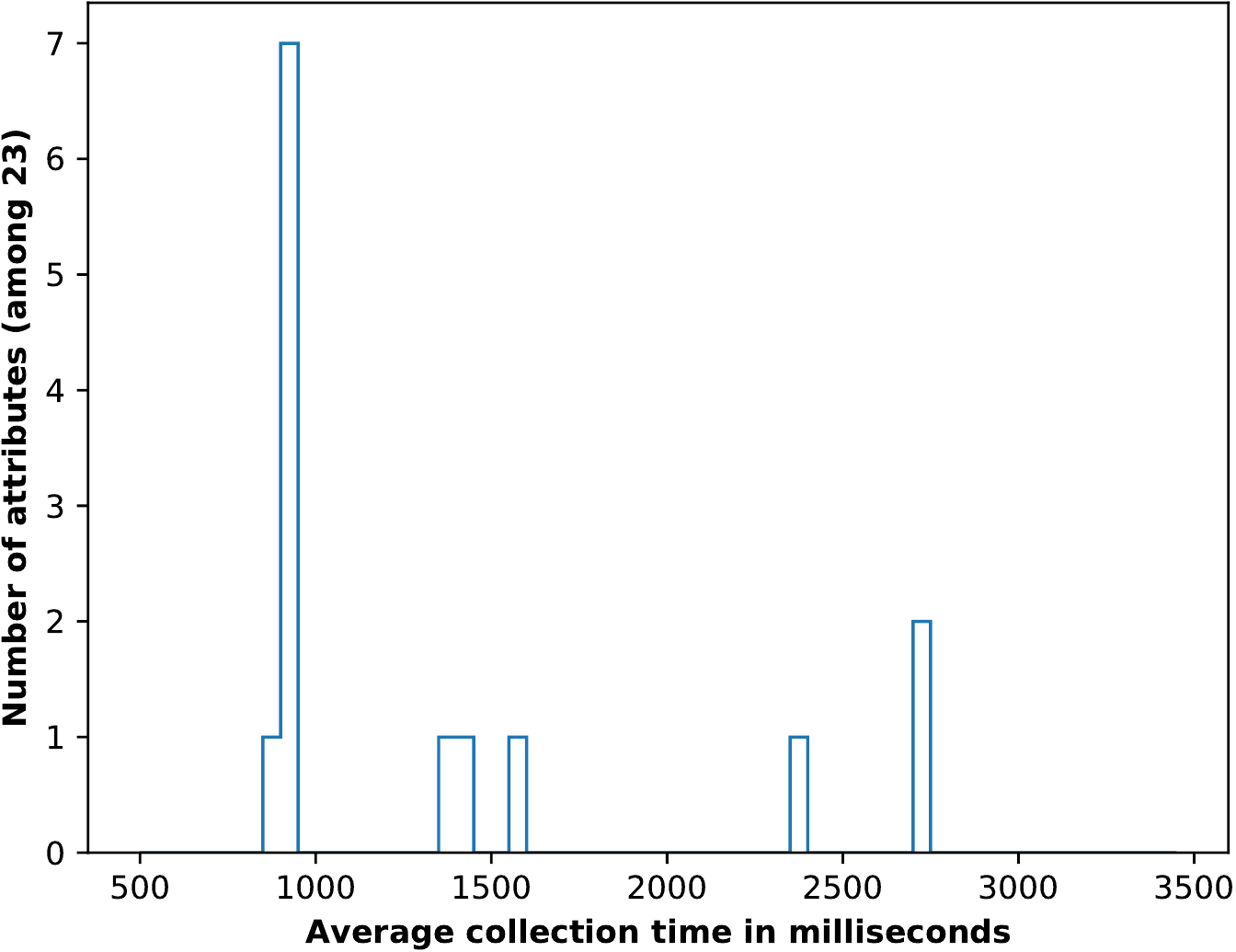}
    \caption{Distribution of the average collection time among the $23$~asynchronous attributes.}
    \label{fig:attributes-asynchronous-collection-time}
    \Description[
      Distribution of the average collection time among the $23$~asynchronous attributes.
    ]{
      Distribution of the average collection time among the $23$~asynchronous attributes.
    }
  \end{figure}

  \begin{figure*}
    \minipage{0.32\textwidth}
      \includegraphics[width=\linewidth]{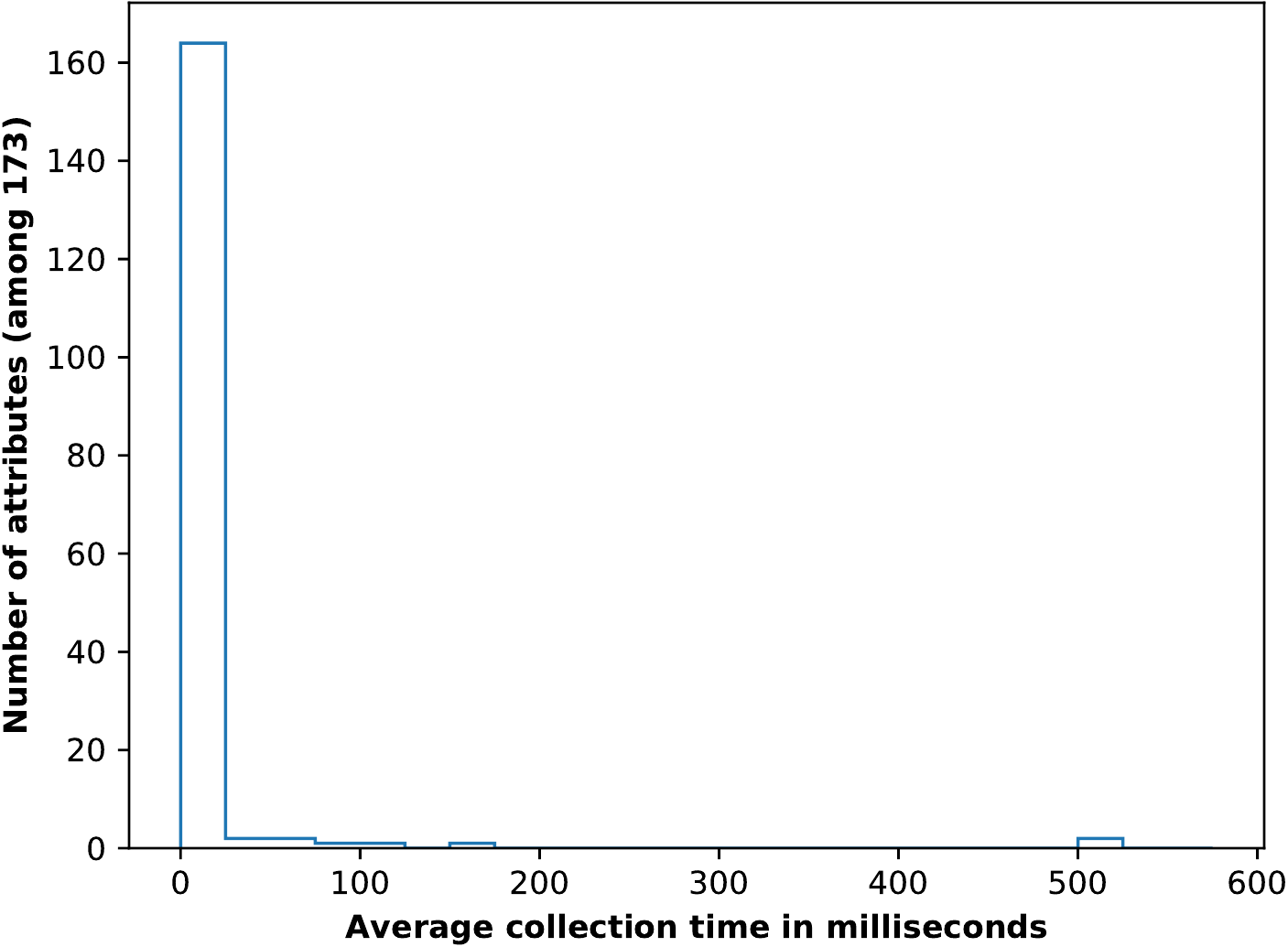}
      \caption{Distribution of the average collection time among the $173$~sequential attributes.}
      \label{fig:attributes-sequential-collection-time}
      \Description[
        Distribution of the average collection time among the $173$~sequential attributes.
      ]{
        Distribution of the average collection time among the $173$~sequential attributes.
      }
    \endminipage
    \hfill
    \minipage{0.32\textwidth}
      \includegraphics[width=\linewidth]{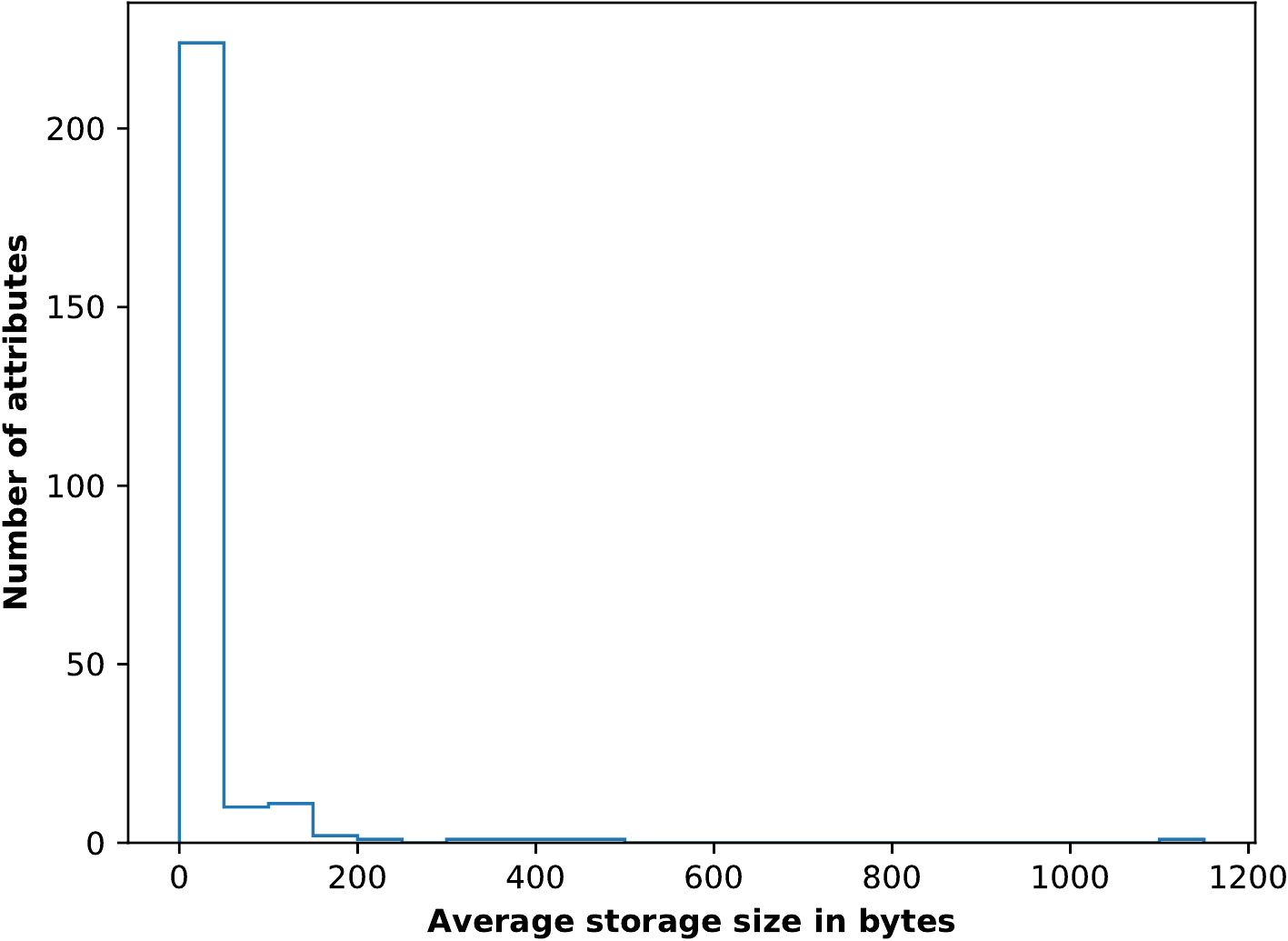}
      \caption{Distribution of the average storage size among the attributes.}
      \label{fig:attributes-storage-size}
      \Description[
        Distribution of the average storage size among the attributes.
      ]{
        Distribution of the average storage size among the attributes.
      }
    \endminipage
    \hfill
    \minipage{0.32\textwidth}
      \includegraphics[width=\linewidth]{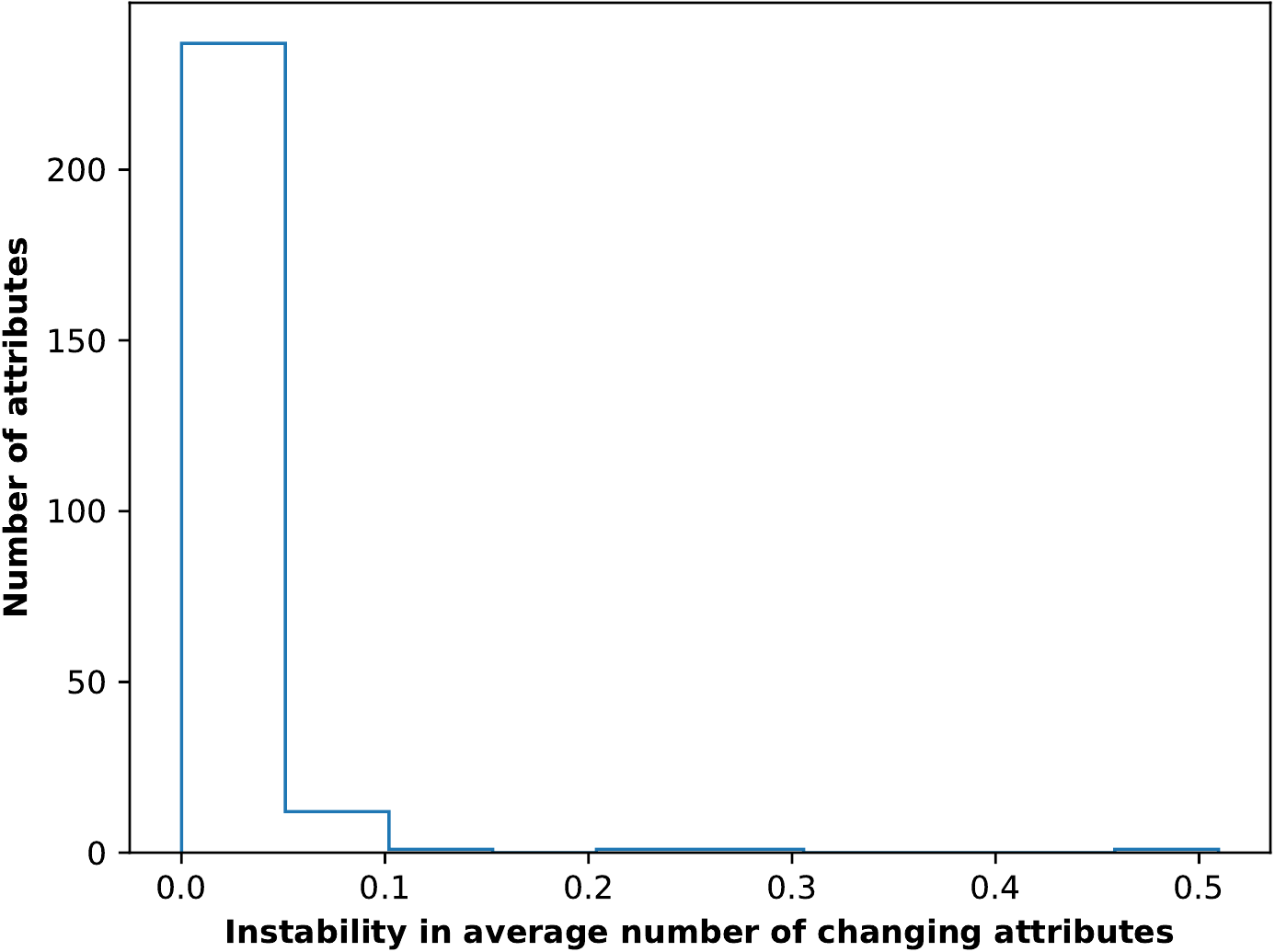}
      \caption{Distribution of the average instability among the attributes.}
      \label{fig:attributes-instability}
      \Description[
        Distribution of the average instability among the attributes.
      ]{
        Distribution of the average instability among the attributes.
      }
    \endminipage
  \end{figure*}

  Figure~\ref{fig:attributes-asynchronous-collection-time} presents the distribution of the average collection time among the $23$~asynchronous attributes.
  These attributes comprise the extension detection methods that require to wait for the web page to render~\cite{SVS17, SN17}, heavy processes like the WebRTC fingerprinting method~\cite{FE16, ALF17} that creates dummy connections, and the audio fingerprinting methods~\cite{QF19}.
  Figure~\ref{fig:attributes-sequential-collection-time} presents the distribution of the average collection time among the $173$~sequential attributes.
  Only $9$~attributes take more than $25$~milliseconds on average to collect, among which we retrieve the six canvases~\cite{MS12, LRB16}, the list of WebGL extensions, the list of available streaming codecs, and the string representation of a specific date.
  The sequential and asynchronous attributes do not sum to the $253$~candidate attributes.
  We do not show the HTTP headers that are collected instantly, which have a null collection time.
  Moreover, we only display the collection time of the attributes source of the extracted attributes, as they have the same collection time.
  Indeed, the source attributes are the one that are actually collected, and the extracted attributes are inferred from them.

  Figure~\ref{fig:attributes-storage-size} presents the distribution of the average storage size among the $253$~attributes.
  We have $29$~attributes that weigh more than $50$~bytes on average.
  They are the list attributes (e.g., list of plugins), the verbose properties (e.g., the \texttt{UserAgent}), and the string representation of the hashed canvases.

  Figure~\ref{fig:attributes-instability} presents the distribution of the average instability of the $253$~attributes.
  Only $16$~attributes have more than $0.05$~changes per observation on average.
  They comprise the attributes related to the screen resolution, the size of the browser window, the canvases, the experimental WebRTC fingerprinting method, the list of speech synthesis voices, the \texttt{Cache-Control} HTTP header, and the attribute that stores the HTTP headers that are not stored in a dedicated attribute.

\section{Attributes Selected by the Attribute Selection Framework}
\label{app:selected-attributes}
  In this appendix, we provide the list of the $21$ attributes that are selected by the attribute selection framework.
  Table~\ref{tab:attributes-selected-by-the-asf} lists the selected attributes.
  We name them according to the same nomenclature as~\cite{AALG21}.
  We denote \emph{N} the \texttt{navigator} object, \emph{S} the \texttt{screen} object, \emph{W} the \texttt{window} object, and \emph{A} an initialized \emph{Audio Context}.
  Finally, we denote \emph{WG} an initialized \emph{WebGL Context}, and \emph{WM} the \texttt{WG.MAX\_} prefix.
  To get the WG.[...].UNMASKED\_RENDERER\_WEBGL attribute, we first get an identifier named \texttt{id} from the unmasked property of the \texttt{get\-Extension('WEB\-GL\-\_\-de\-bug\-\_\-ren\-de\-rer\-\_\-in\-fo')} object, and then get the actual value by calling \texttt{get\-Parameter(id)}.
  We use square brackets so that \texttt{A.[B, C]} means that the property is accessed through \texttt{A.B} or \texttt{A.C}.

  \begin{table*}
    \centering
    \begin{tabular}{lcccccccccccccccccc}
      \toprule
        \multirow{6}{*}{Selected Attribute} & \multicolumn{9}{c}{$k=1$} & \multicolumn{9}{c}{$k=3$} \\
      \cmidrule(lr){2-10} \cmidrule(lr){11-19}
        & \multicolumn{4}{c}{$\beta=1$} & \multicolumn{3}{c}{$\beta=4$} & \multicolumn{2}{c}{$\beta=16$} & \multicolumn{4}{c}{$\beta=1$} & \multicolumn{3}{c}{$\beta=4$}  & \multicolumn{2}{c}{$\beta=16$} \\
      \cmidrule(lr){2-5} \cmidrule(lr){6-8} \cmidrule(lr){9-10}
      \cmidrule(lr){11-14} \cmidrule(lr){15-17} \cmidrule(lr){18-19}
        & \rotatebox{90}{$\alpha=0.001$} & \rotatebox{90}{$\alpha=0.005$} & \rotatebox{90}{$\alpha=0.015$} & \rotatebox{90}{$\alpha=0.025$} & \rotatebox{90}{$\alpha=0.005$} & \rotatebox{90}{$\alpha=0.015$} & \rotatebox{90}{$\alpha=0.025$} & \rotatebox{90}{$\alpha=0.015$} & \rotatebox{90}{$\alpha=0.025$} & \rotatebox{90}{$\alpha=0.001$} & \rotatebox{90}{$\alpha=0.005$} & \rotatebox{90}{$\alpha=0.015$} & \rotatebox{90}{$\alpha=0.025$} & \rotatebox{90}{$\alpha=0.005$} & \rotatebox{90}{$\alpha=0.015$} & \rotatebox{90}{$\alpha=0.025$} & \rotatebox{90}{$\alpha=0.015$} & \rotatebox{90}{$\alpha=0.025$} \\
      \midrule
        W.innerHeight & $\bullet$ & $\bullet$ & $\bullet$ & $\bullet$ & $\bullet$ & $\bullet$ & $\bullet$ & $\bullet$ & $\bullet$ & $\bullet$ & $\bullet$ & $\bullet$ & $\bullet$ & $\bullet$ & $\bullet$ & $\bullet$ & $\bullet$ & $\bullet$ \\
        N.hardwareConcurrency & $\bullet$ & & $\bullet$ & & $\bullet$ & $\bullet$ & & $\bullet$ & $\bullet$ & & & $\bullet$ & & $\bullet$ & $\bullet$ & $\bullet$ & $\bullet$ & \\
        WG.[...].UNMASKED\_RENDERER\_WEBGL & & $\bullet$ & & & & $\bullet$ & $\bullet$ & $\bullet$ & & & $\bullet$ & & & & $\bullet$ & & $\bullet$ & $\bullet$ \\
        N.appVersion & $\bullet$ & & & & $\bullet$ & & & $\bullet$ & & $\bullet$ & & & & $\bullet$ & & & $\bullet$ & $\bullet$ \\
        HTML5 canvas inspired by AmIUnique (PNG) & $\bullet$ & & & & $\bullet$ & & & $\bullet$ & $\bullet$ & $\bullet$ & & & & & & & $\bullet$ & $\bullet$ \\
        N.connection.type & $\bullet$ & & & & $\bullet$ & & & $\bullet$ & & $\bullet$ & & & & $\bullet$ & & & $\bullet$ & \\
        N.plugins & $\bullet$ & & & & & & & $\bullet$ & $\bullet$ & $\bullet$ & & & & & & & $\bullet$ & \\
        {[[N, W].doNotTrack, N.msDoNotTrack]} & & $\bullet$ & & $\bullet$ & $\bullet$ & & & & & & $\bullet$ & & & & & & & \\
        Accept-Encoding HTTP header & $\bullet$ & & & & & & & & & $\bullet$ & & & & $\bullet$ & & & & \\
        Height of first bounding box & $\bullet$ & & & & & & & & & $\bullet$ & & & & & & & & \\
        Origin of a created \texttt{div} & & & & & $\bullet$ & & & & & & & & & $\bullet$ & & & & \\
        W.ontouchstart support & & & & & & & & & & & & & $\bullet$ & & & $\bullet$ & & \\
        W.[performance, console].jsHeapSizeLimit & & & & & $\bullet$ & & & & & & & & & & & & & \\
        S.width & & & & & & & & & & $\bullet$ & & & & & & & & \\
        W.openDatabase support & & & & & & & & & & & & & $\bullet$ & & & & & \\
        WM.COMBINED\_TEXTURE\_IMAGE\_UNITS & & & & & & & & & & & & & & $\bullet$ & & & & \\
        N.platform & & & & & & & & & & & & & & $\bullet$ & & & & \\
        HTML5 canvas similar to Morellian (PNG) & & & & & & & & & & & & & & $\bullet$ & & & & \\
        Accept-Language HTTP header & & & & & & & & & & & & & & $\bullet$ & & & & \\
        WM.CUBE\_MAP\_TEXTURE\_SIZE & & & & & & & & & & & & & & & & $\bullet$ & & \\
        A.sampleRate & & & & & & & & & & & & & & & & $\bullet$ & & \\
      \bottomrule
    \end{tabular}

    \caption{
      The attributes selected by the attribute selection framework for each experimentation setup.
      We denote $k$ the number of explored paths, $\beta$ the number of fingerprint submissions, and $\alpha$ the sensitivity threshold.
    }
    \label{tab:attributes-selected-by-the-asf}
  \end{table*}

\end{document}